\date{}
\theoremstyle{plain}
\newtheorem{theorem}{Theorem}
\newtheorem{proposition}[theorem]{Proposition}
\newtheorem{fact}[theorem]{Fact}
\newtheorem{lemma}[theorem]{Lemma}
\newtheorem{definition}[theorem]{Definition}
\newcommand{\PQ}{\lang{PQ}}
\newcommand{\PQT}{\lang{PQ-TS}}
\newcommand{\PQs}{\PQ}
\newcommand{\PQTs}{\PQT}
\newcommand{\HAI}{\lang{WeakIndex}}
\newcommand\MHAI[2]{\lang{WeakIndex}(#1,#2)}
\newcommand\RD[2]{\lang{Raindrops}(#1,#2)}
\DeclareMathOperator*{\Exp}{{\mathbb E}}
\DeclareMathOperator{\hash}{hash}
\DeclareMathOperator{\update}{\texttt{Update}}
\newcommand{\hel}{\mathrm{h}}
\newcommand{\mi}{\mathrm{I}}
\newcommand{\h}{\mathrm{H}}
\newcommand{\Order}{\mathrm{O}}
\newcommand{\eps}{\varepsilon}
\newcommand{\ins}{\texttt{ins}}
\newcommand{\ext}{\texttt{ext}}
\newcommand{\Col}{\textsc{Collection}}
\newcommand{\N}{{\mathbb N}}
\renewcommand{\polylog}{\mathrm{polylog}}
\newcommand\restr[2]{
  \left.\kern-\nulldelimiterspace 
  #1 
  \vphantom{\big|} 
  \right|_{#2} 
  }
\newcommand{\comments}[1]{}
\title{Streaming Complexity of Checking \mbox{Priority Queues}%
\footnote{Supported by the French ANR Defis program under contract ANR-08-EMER-012 (QRAC project)}
}
\author[1]{Nathana\"el Fran\c{c}ois}
\author[2]{Fr\'ed\'eric Magniez}
\affil[1]{Univ Paris Diderot, Sorbonne Paris-Cit\'e, LIAFA, CNRS, 75205 Paris, France\\
\texttt{nathanael.francois@liafa.univ-paris-diderot.fr}}
\affil[2]{CNRS, LIAFA, Univ Paris Diderot, Sorbonne Paris-Cit\'e, 75205 Paris, France\\
\texttt{frederic.magniez@univ-paris-diderot.fr}
}
\begin{document}

\maketitle

\begin{abstract}
This work is in the line of designing efficient checkers for testing the reliability of
some massive data structures.
Given a sequential access to the insert/extract operations on such a structure,
one would like to decide, a posteriori only, if it corresponds to the evolution of a reliable structure.
In a context of massive data, one would like to minimize both the amount of reliable memory of the checker
and the number of passes on the sequence of operations.

Chu, Kannan and McGregor~\cite{ckm07} initiated the study of checking priority queues in this setting.
They showed that the use of timestamps allows to check a priority queue with a single pass and memory space
$\tilde{\Order}(\sqrt{N})$. Later, Chakrabarti, Cormode, Kondapally and McGregor~\cite{cckm10} removed
the use of timestamps, and proved that more passes do not help.

We show that, even in the presence of timestamps, more passes do not help, solving an open problem
of~\cite{ckm07,cckm10}. On the other hand, we show that a second pass, but in {\em reverse} direction,
shrinks the memory space to $\tilde{\Order}((\log N)^2)$, extending a phenomenon the first time
observed by Magniez, Mathieu and Nayak~\cite{mmn10} for checking well-parenthesized expressions.
\end{abstract}

\section{Introduction}
The reliability of memory is central and becomes challenging when it is massive.
In the context of program checking~\cite{bk95} this problem has been
addressed by Blum, Evans, Gemmell, Kannan and Naor~\cite{begkn94}. 
They designed on-line checkers that use a small amount of reliable
memory to test the behavior of some data structures. 
Checkers are allowed to be randomized and to err with small error probability.
 In that case the error probability is not over the inputs but over the random coins of the algorithm.

Chu, Kannan and McGregor~\cite{ckm07} revisited this problem for priority queue data structures, 
where the checker only has to detect an error after processing an entire sequence of data accesses.
This can be rephrased as a one-pass streaming recognition problem.
Streaming algorithms sequentially scan the whole input piece by piece in one sequential pass,
or in a small number of passes,
while using sublinear memory space. 
In our context, the stream is defined by the sequence of insertions and extractions on the priority queue.
Using a streaming algorithm, the objective is then 
to decide if the stream corresponds to a correct implementation of a priority queue.
We also consider collection data structures that implement multisets. 
\begin{definition}[\Col,\PQ]
Let $\Sigma_0$ be some alphabet.
Let $\Sigma=\{\ins(a),\ext(a): a \in \Sigma_0\}$. For $w \in \Sigma^N$,
define inductively multisets $M_i$
by  $M_0=\emptyset$, $M_i=M_{i-1}\setminus \{a\}$ if $w[i]=\ext(a)$,
and $M_i=M_{i-1} \cup \{a\}$ if $w[i]=\ins(a)$.\\
Then $w\in \Col(\Sigma_0)$ if and only if $M_n=\emptyset$
and $a\in M_{i-1}$ when $w[i]=\ext(a)$, for $i=1,\dots,N$.
Moreover, $w\in \PQ(U)$, for $U\in\N$,  if and only if $w\in\Col(\{0,1,\ldots,U\})$ and
$a=\max(M_{i-1})$ when $w[i]=\ext(a)$, for $i=1,\dots,N$.
\end{definition}

Streaming algorithms were initially designed with a single pass: 
when a piece of the stream has been read, it is gone for ever.
This makes those algorithms of practical interest for online context, such as network monitoring,
for which first streaming algorithms were developed~\cite{ams99}.
Motivated by the explosion in the size of the data that algorithms are
called upon to process in everyday real-time applications, 
the area of streaming algorithms has experienced tremendous growth over the last decade in many applications.
In particular, a streaming algorithm can model an external read-only memory. 
Examples of such applications occur in bioinformatics for genome
decoding, or in Web databases for the search of documents. 
In that context, considering multi-pass streaming algorithm is relevant.

Using standard arguments one can establish that every $p$-pass randomized streaming algorithm
needs memory space $\Omega(N/p)$ for recognizing $\Col$.
Nonetheless, 
Chakrabarti, Cormode, Kondapally and McGregor~\cite{cckm10} gave a one-pass randomized for $\PQ$ using memory space
$\tilde{O}(\sqrt{N})$.
They also showed that several passes do not help, since any $p$-pass
randomized algorithm would require memory space $\Omega({\sqrt{N}}/{p})$. 
A similar lower bound was showed independently, but using different tools,
by Jain and Nayak~\cite{jn10}. 
The case of a single pass was established previously by Magniez, Mathieu and Nayak~\cite{mmn10}
for checking the well-formedness of parenthesis expressions, or equivalently the behavior of a stack.

A simpler variant of $\PQ$ with timestamps was in fact first studied by Chu, Kannan and McGregor~\cite{ckm07},
where now each item is inserted to the queue with its index. 

\begin{definition}[\PQT]
Let $\Sigma=\{\ins(a),\ext(a): a \in \{0,1,\ldots,U\}\}\times {\mathbb
  N}$. Let $w \in \Sigma^N$. 
Then $w\in\PQT(U)$ if and only if 
$w\in\Col(\Sigma)$, $w[1,\dots,N][1]\in \PQ(U)$,
and $w[i][2]=i$ when $w[i][1]=\ins(a)$.
\end{definition}

Nonetheless the two works~\cite{ckm07,cckm10} let open two problems.
The lower bound of~\cite{cckm10} was only proved for $\PQ$, and
no significant lower bounds for $\PQT$ was established.
Moreover, the streaming complexity of $\PQ$ for algorithms that can
process the stream in any direction has not been studied. 

Even though recognizing $\PQT$ is obviously easier than recognizing $\PQ$, 
our first contribution (Section~\ref{sec:lb}) consists in showing that
they both obey the same limitation, even with multiple passes in the
same direction. 
\begin{theorem}\label{main1}
Every $p$-pass randomized streaming algorithm 
recognizing $\PQT(3N/2)$ with bounded error $1/3$   
requires memory space $\Omega({\sqrt{N}}/{p})$ for inputs  of length $N$.
\label{multilowerbound}
\end{theorem}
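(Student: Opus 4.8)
The plan is to derive Theorem~\ref{main1} from a communication complexity lower bound, in the spirit of the $\sqrt N/p$ bounds of~\cite{cckm10,mmn10,jn10} but with a hard-instance construction that survives the presence of timestamps. Fix $t=\Theta(\sqrt N)$ and view a length-$N$ stream as $t$ consecutive blocks of $\Theta(\sqrt N)$ operations each, block~$j$ being held by player~$j$ of a line of $t$ players. A $p$-pass streaming algorithm using $s$ bits of memory yields a protocol in which, over $p$ sweeps along the line, each player forwards the current $s$-bit state to the next, for a total of $O(p\,t\,s)$ bits communicated. Hence, if one produces a family of words over the alphabet of $\PQT(3N/2)$ on which the associated $t$-player problem — call it $\RD{m}{t}$, with $m=3N/2$ — requires $\Omega(t\sqrt N)=\Omega(N)$ bits of communication against protocols making only $O(p)$ sweeps, then $p\,t\,s=\Omega(N)$, i.e.\ $s=\Omega(\sqrt N/p)$, the $1/3$ error being absorbed into the constant by amplification.

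I would design the hard instances so that (i) a ``yes'' instance of $\RD{m}{t}$ encodes a word that lies in $\PQT(3N/2)$, whereas a ``no'' instance encodes a word that violates the priority-queue condition; and (ii) the violation is global, in the sense that at the boundary right after block~$j$ the algorithm must essentially know the multiset currently held in the queue — $\Omega(\sqrt N)$ bits — in order to decide whether the remaining blocks complete to a valid word. Concretely, each block would perform $\Theta(\sqrt N)$ insertions of values drawn from a block-specific range, the ranges ordered so that extractions scheduled in later blocks must return the maxima of earlier ones; the hidden bits control whether one designated extraction returns the genuine current maximum or a decoy value, and the surrounding operations are arranged so that no value is ever extracted while a strictly larger one is still present unless the hidden bits dictate ``no''. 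The slack in the universe ($3N/2$ rather than $N$) is exactly what supplies enough distinct decoy values to do this in every block while keeping the stream length at $N$. The internal primitive is $\HAI$, an \lang{Index}-style problem with one-way communication complexity $\Omega(\sqrt N)$; $\RD{m}{t}$ is morally a direct sum of $t$ shifted copies of $\HAI$ chained along the line, and a direct-sum / information-cost argument lifts the per-copy $\Omega(\sqrt N)$ bound to an $\Omega(t\sqrt N)$ bound for the whole, robustly against $O(p)$-sweep protocols.

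The feature that makes $\PQT$ genuinely different from $\PQ$, and the step I expect to be the main obstacle, is that the timestamps are mandatory and verifiable, so they must not leak the hidden bits. The fix I would pursue is to arrange that every timestamp appearing in the stream — on insertions by definition, and on the matching extractions in the intended correct execution — is a fixed function of the block index and the position within the block: the insert positions used inside a block form a fixed arithmetic progression independent of the communication instance, so the full timestamp pattern is identical across all members of the hard family. The technical heart is then to show that, under this constraint, a streaming algorithm still cannot use timestamp-consistency — or any other purely local check — to avoid transmitting $\Omega(\sqrt N)$ bits across each block boundary; that is, that the reduction preserves hardness even though $\PQT$ is syntactically more rigid than $\PQ$. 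This is exactly what the $\RD{m}{t}$ problem is meant to isolate, and where the argument departs from that of~\cite{cckm10}.

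To finish, I would assemble: a one-way $\Omega(\sqrt N)$ lower bound for $\HAI$ by a standard corruption/information-theoretic argument; the direct-sum step giving $\Omega(N)$ total communication for $\RD{m}{t}$ against $O(p)$-sweep protocols; the stream encoding above, turning any $p$-pass, $s$-space algorithm for $\PQT(3N/2)$ into such a protocol with $O(p\sqrt N\,s)$ communication; and the combination of the last two, which yields $s=\Omega(\sqrt N/p)$.
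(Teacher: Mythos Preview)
Your high-level plan --- hard instances encoding a chained communication problem, a streaming-to-protocol reduction, and a direct sum via information cost --- is the paper's route, and your treatment of timestamps is correct: in the paper's instances the timestamps are determined by position alone and leak nothing about the hidden bits.

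The gap is in the base case. You describe $\HAI$ as ``an Index-style problem with one-way communication complexity $\Omega(\sqrt N)$'' and assert that the direct sum then yields $\Omega(N)$ total communication ``robustly against $O(p)$-sweep protocols''. But two-party Index on $n$ bits drops to $O(\log n)$ as soon as a second round is allowed, and a $p$-pass streaming simulation \emph{is} a genuine $p$-round protocol: the state at the end of a pass loops back to the first player. So a one-way bound on a two-player primitive does not by itself survive multiple sweeps; you need the per-copy problem to remain hard under the specific low-information-per-copy constraint that the direct sum actually produces, and nothing in your sketch explains why it does.

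The paper resolves this by making the base problem genuinely three-party and by tracking \emph{which} player leaks \emph{what}. The stream is not $t$ homogeneous blocks with one player each: each of the $m$ motifs is split across players $A_i,B_i,C_i$, with $B_i$ holding the index $k_i$, the bit $d_i$, and the \emph{prefix} $x_i[1,k_i{-}1]$, while $C_i$ holds the suffix $x_i[k_i,n]$; the $C_i$'s sit together at the end of the stream in reverse order. The direct sum (Lemma~\ref{multi_dir_sum}) then bounds not total communication but the two quantities $\mi(D:\Pi'_B\mid X,K)$ and $\mi(K,D:\Pi'_C\mid X)$, each by $O(ps/m)$, together with $|\Pi'|\le O(ps)$. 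The single-instance lower bound (Lemma~\ref{lower_bound_epsilon}) shows these cannot all be small: one groups $A$ with $C$, applies the cut-and-paste property (Proposition~\ref{cut-and-paste}) to the resulting two-party protocol, and combines it with average-encoding Hellinger bounds (Lemmas~\ref{lower_bound_ac} and~\ref{lower_bound_b}). This three-player split and the cut-and-paste step are the technical heart of the multi-pass bound, and your sketch has no counterpart to them; the phrase ``one-way $\Omega(\sqrt N)$'' is precisely where a naive argument would collapse for $p\ge 2$.
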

As a consequence, since this lower bound uses very restricted hard
instances, it models most of possible variations.  
For instance, assuming that the input is in $\Col$ and has no
duplicates, is not sufficient to guarantee a faster algorithm. 
The proof of Theorem~\ref{main1} consists in introducing a related
communication problem with $\Theta(\sqrt{N})$ players. 
Then we reduce the number of players to $3$, and prove a lower bound
on the information carried by players, leading to the desired lower
bound. 
We are following the {\em information cost} approach taken 
in~\cite{ChakrabartiSWY01,SaksS02,Bar-YossefJKS04,JayramKS03,JainRS03b},
among other works. 
Recently, the information cost appeared as one of the most central notion
in communication complexity~\cite{br11,bra12,kllrx12}.
The information cost of a protocol 
is the amount of information that messages carry about players' inputs.
We adapt this notion to suit both the nature of streaming algorithms and of our problem.

Even if our result suggests that allowing multiple passes does not help,
one could also consider the case of bidirectional passes.
We believe that it is a natural relaxation of multi-pass streaming
algorithms where the stream models some external read-only memory.
In that case, we show that a second pass, but in reverse order, makes the problem of checking $\PQ$ easy, 
even with no timestamps (Section~\ref{sec:algo}).
A similar phenomenon has been established previously in~\cite{mmn10} for checking the well-formedness
of parenthesis expressions. Their problem is simpler than ours, and therefore our algorithm is more general.
\begin{theorem}\label{main2}
There is a bidirectional $2$-pass randomized streaming algorithm
recognizing $\PQ(U)$ 
with memory space $\Order((\log N)(\log U+\log N))$, 
time per processing item $\polylog(N,U)$, and one-sided bounded error ${N^{-c}}$,
for inputs  of length $N$ and any constant $c>0$.
\end{theorem}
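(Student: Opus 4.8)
The plan is to first reduce membership in $\PQ(U)$ to the verification of a single, canonically determined matching of extracts to inserts, and then to check that matching by propagating Karp--Rabin fingerprints over a dyadic hierarchy of the stream, extending the two-pass technique of~\cite{mmn10} for parenthesis expressions.

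\textbf{Reduction.} The first observation is that membership in $\PQ(U)$ depends only on the prefix net-counts $c_v(i)=\#\{k\le i: w[k]=\ins(v)\}-\#\{k\le i: w[k]=\ext(v)\}$: indeed $w\in\Col$ iff $c_v(i)\ge 0$ for all $i,v$ and $c_v(N)=0$ for all $v$, and, given this, the requirement $a=\max(M_{i-1})$ at a step $w[i]=\ext(a)$ is exactly $c_v(i-1)=0$ for all $v>a$. Since the ``priority'' part is purely count-based, it holds for every valid insert/extract matching or for none; so one may fix once and for all the canonical last-in--first-out-within-each-value matching $\mu$, and then $w\in\PQ(U)$ iff $\mu$ is well defined (no underflow), the final multiset is empty, and $\mu$ contains no \emph{bad pair}, i.e.\ no two matched pairs $(i,j)$ of value $a$ and $(i',j')$ of value $b>a$ with $i'<j<j'$. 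The task is now that of recognizing a Dyck-like word whose brackets carry labels in $\{0,\dots,U\}$, except that the legal nesting rule is the priority rule rather than plain non-crossing.

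\textbf{Checking engine.} Work over a prime field $\mathbb{F}_q$ with $q=\mathrm{poly}(N,U)$ and a uniformly random evaluation point, so that two distinct length-$\Order(N)$ sequences collide under polynomial fingerprinting with probability at most $N^{-c-1}$; an element of $\mathbb{F}_q$, a stream position, or a value each fit in $\Order(\log U+\log N)$ bits. Decompose $[1,N]$ into its $\log N$ dyadic levels and charge each matched pair of $\mu$ to the smallest dyadic interval that contains both of its endpoints. At any instant of a pass the reading head lies inside exactly one dyadic interval per level, so the algorithm can afford to maintain, for each of the $\Order(\log N)$ currently open intervals, a constant-size \emph{signature}: fingerprints summarizing the value-labels of the matching edges that cross each boundary of the interval, together with just enough data to adjudicate the priority rule at that boundary. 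In the forward pass these signatures are assembled bottom up --- when the right child of an interval is completed, its signature is combined with that of the left child; the reverse pass recomputes the same signatures from the other end and checks, across every dyadic boundary, that each crossing insert and its matched extract carry equal labels and that no bad pair straddles the boundary. Since every underflow, every label mismatch, and every bad pair straddles some dyadic boundary, any violation in $w$ forces a fingerprint mismatch with probability $1-N^{-c}$ by a union bound over the $\Order(N)$ boundary events, and the algorithm rejects. Each step updates $\Order(\log N)$ signatures with $\polylog(N,U)$-time field arithmetic, giving the stated per-item time, and the space is $\Order(\log N)$ signatures of $\Order(\log U+\log N)$ bits each.

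\textbf{Main obstacle.} The crux --- and the only place where the priority-queue semantics really enters, beyond bookkeeping --- is to prove that the signature of a dyadic interval can be kept of size $\polylog(N,U)$ and merged associatively. A priori the multiset of items open across a cut has size $\Theta(N)$ and is an arbitrary multiset of values, so one cannot store it; the point is that the pattern of bad pairs such a cut can create is essentially governed by a bounded amount of threshold information --- in particular the largest value among still-open items that will be extracted after the cut --- so that a constant number of fingerprints per level suffices to detect, from both sides, any violation involving an edge through that cut. Establishing this structural fact, and wiring it so that one forward pass (to build signatures bottom up) together with one reverse pass (to audit them) catch every possible error, is the heart of the argument; the choice of $q$ realizing error $N^{-c}$, the explicit fingerprint-update rules, and the dyadic bookkeeping are then routine.
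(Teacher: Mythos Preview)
Your proposal identifies the right overall architecture --- two passes in opposite directions, a dyadic block hierarchy, and Karp--Rabin fingerprints, as in~\cite{mmn10} --- and your reduction to detecting a ``bad pair'' under the canonical matching is a clean way to phrase the target. However, the proposal is not a proof: the \textbf{Main obstacle} paragraph explicitly defers the crux (``Establishing this structural fact\ldots\ is the heart of the argument'') without establishing it, and the hint you give --- that the signature of a cut is governed by ``the largest value among still-open items that will be extracted after the cut'' --- is both future-dependent (hence not computable in a forward pass) and, more importantly, not shown to suffice.

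The paper's algorithm does \emph{not} try to fingerprint the multiset of edges crossing a cut and compare the two sides. Instead it keeps, for each block $B$ on the stack, only the minimum $m_B$ of the extractions in $B$, and uses the sequence of $m_B$'s as thresholds to \emph{route} each letter to a block's hashcode (line~\ref{BlockIns}); correctness is then a case analysis on where the offending $\ins(a)$ lies relative to the two adjacent blocks $B,C$ singled out by the minimal error position (Cases~1--3 and Lemma~\ref{lemme a}). The point is that the minima $m_B$ are exactly what let you decide, in either direction, which block an insertion should match into, and the check ``$h_B=0$ whenever a smaller extraction appears'' is what catches a mis-routed item. Your proposal replaces this routing-by-minimum mechanism with an unproved assertion that a constant amount of threshold data per level suffices; to turn the proposal into a proof you would have to either (i) specify that data and prove an analogue of the three-case analysis, or (ii) adopt the paper's $m_B$-based routing and show it detects every bad pair in one of the two passes. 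As written, the step from ``bad pair straddles some dyadic boundary'' to ``some $\Order(1)$-size signature mismatch is forced'' is the whole theorem, and it is missing.
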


Our algorithm uses a hierarchical data structure similar to the one
introduced in~\cite{mmn10} for checking well-parenthesized
expressions. At high level, it also behaves similarly. 
It performs one pass in each direction and makes an on-line
compression of past information in at most $\log N$ hashcodes.
While this compression can loose information, the compression technique ensures that 
a mistake is always detected in one of the two directions. Nonetheless our algorithm differs on two main points.
First, unlike parenthesized expressions, $\PQ$ is not symmetric. 
Therefore one has to design an algorithm for each pass.
Second, the one-pass algorithm for $\PQ$~\cite{cckm10}
is technically more advanced than the one of~\cite{mmn10}. Thus designing a bidirectional
$2$-pass algorithm for $\PQ$ is more challenging.

Theorems~\ref{main1} and~\ref{main2} point out a strange situation but not isolated at all.
Languages studied in~\cite{ckm07,mmn10,cckm10,km12} and in this paper
have space complexity $\Theta(\sqrt{N}\polylog(N))$
for a single pass, $\Omega(\sqrt{N}/p)$ for $p$ passes in the same direction, 
and $\polylog(N)$ for $2$ passes but one in each direction.
We hope this paper makes progress in the study that phenomenon.

\section{Preliminaries}

In streaming algorithms (see~\cite{MuthuBook} for an introduction), a {\em pass\/}
on an input $w\in\Sigma^N$, for some alphabet $\Sigma$, means that $w$
is given as an {\em input stream} $w[1],w[2],\ldots,w[N]$, which
arrives sequentially, i.e., letter by letter in this order.
For simplicity, we assume throughout this article that the input length $N$
is always given to the algorithm in advance. Nonetheless,
all our algorithms can be adapted to the case in which $N$ is
unknown until the end of a pass.  
\begin{definition}[Streaming algorithm]
   A $p$-pass randomized
   {\em streaming algorithm}  with space
  $s(N)$ and time $t(N)$ is a randomized
  algorithm that, given $w\in\Sigma^N$ as an input stream,
\begin{itemize}
\item  performs $k$ sequential passes on $w$;
\item maintains a memory space of size at most $s(N)$ bits
while reading $w$;
\item has running time at most $t(N)$ per processed letter $w[i]$;
\item has preprocessing and postprocessing time at most $t(N)$.
\end{itemize}
The algorithm is {\em bidirectional\/} if it is allowed to
access to the input in the reverse order, after reaching the end of
the input. Then $p$ is the total number of passes in
either direction.
\end{definition}

The proof of our lower bound uses the language of communication
complexity with multi-players, 
and is based on information theory arguments.
We consider {\em number-in-hand} and  {\em message-passing} communication protocols.
Each player is given some input, and can communicate with another
player according to the rules of the protocol.
Our players are embedded into a directed circle, so that each
player can receive (resp. transmit) a message from its unique
predecessor (resp. successor). Each player send a message after
receiving one, until the end of the protocol is reached. 
Players have no space and time restriction. Only the number of rounds
and the size of messages are constrained.

Consider a randomized multi-player communication protocol $P$.
We consider only two types of random source, that we call {\em coins}.
Each player has access to its own independent source of {\em private coins}.
In addition, all players share another common source of {\em public coins}.
The output of $P$ is announced by the last player. This is therefore
the last message of the last player. We say that $P$ is with bounded
error $\epsilon$ when $P$ errs with probability at most $\eps$ over
the private and public coins. 
The {\em transcript $\Pi$} of $P$ is the concatenation of all messages
sent by all players, including all public coins. In particular, it
contains the output of $P$, since it is given by the last player. 
Given a subset $S$ of players, we let $\Pi_S$ be the concatenation of
all messages  sent by players in $S$, including again all public
coins.

We now remind the usual notions of entropy $\h$ and mutual information $\mi$.
Let $X,Y,Z$ be random variables. Then $\h(X)=-\Exp_{x\gets X}\log \Pr(X=x)$,
$\h(X|Y=y)=-\Exp_{y\gets Y}\log \Pr(X=x|Y=y)$,
$\h(X|Y)=\Exp_{y\gets Y} \h(X|Y=y)$,
and $\mi(X:Y|Z)=\h(X|Z)-\h(X|Y,Z)$. 
The entropy and the mutual information are non negative
and satisfy $\mi(X:Y|Z)=\mi(Y:X|Z)$.

The mutual information between two random variables is connected to the Hellinger distance $\hel$
between their respective distribution probabilities. Given a random variable $X$ we also
denote by $X$ its underlying distribution.
\begin{proposition}[Average encoding]
Let $X,Y$ be random variables. Then
$\Exp_{y \gets Y}\hel^2(X|_{Y=y},X) \leq \kappa\mi(X:Y)$, where
$\kappa=\frac{\ln 2}{2}$.
\end{proposition}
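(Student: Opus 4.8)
The plan is to derive the statement from two standard facts: a pointwise comparison of squared Hellinger distance with relative entropy, and the identity expressing mutual information as an average of relative entropies.

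First, I would fix two probability distributions $P,Q$ on a common domain and recall the convention under which $\hel^2(P,Q)=\tfrac12\sum_x(\sqrt{P(x)}-\sqrt{Q(x)})^2=1-\sum_x\sqrt{P(x)Q(x)}$. Writing $\mathrm{D}(P\|Q)=\sum_x P(x)\log\frac{P(x)}{Q(x)}$ for the relative entropy (in bits, matching the base of $\log$ used for $\h$), I would prove the pointwise bound $\hel^2(P,Q)\le\kappa\,\mathrm{D}(P\|Q)$ with $\kappa=\tfrac{\ln 2}{2}$. The argument is short: converting to natural logarithms and applying Jensen's inequality to the concave function $\ln$,
\[
\kappa\,\mathrm{D}(P\|Q)=-\sum_x P(x)\ln\sqrt{\tfrac{Q(x)}{P(x)}}\ \ge\ -\ln\!\Big(\sum_x P(x)\sqrt{\tfrac{Q(x)}{P(x)}}\Big)=-\ln\!\big(1-\hel^2(P,Q)\big),
\]
and then $-\ln(1-t)\ge t$ for $t\in[0,1)$ gives $\kappa\,\mathrm{D}(P\|Q)\ge\hel^2(P,Q)$. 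Note that the factor $\ln 2$ in $\kappa$ is exactly the price of passing from the natural logarithm (needed for Jensen) back to the base-$2$ logarithm in the definition of $\mathrm{D}$.

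Second, I would instantiate $P=X|_{Y=y}$ and $Q=X$, average over $y\gets Y$, and pull out the constant, obtaining $\Exp_{y\gets Y}\hel^2(X|_{Y=y},X)\le\kappa\,\Exp_{y\gets Y}\mathrm{D}(X|_{Y=y}\|X)$. The proof is then completed by the classical identity $\Exp_{y\gets Y}\mathrm{D}(X|_{Y=y}\|X)=\mi(X:Y)$, which follows by writing $\mi(X:Y)=\h(X)-\h(X|Y)$, expanding both entropies with the definitions above, and using $\Pr(X=x)=\Exp_{y\gets Y}\Pr(X=x\mid Y=y)$ to regroup the terms into a per-$y$ relative entropy.

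I do not anticipate a genuine obstacle; the only delicate point is bookkeeping the logarithm base, which is what turns the naive constant $\tfrac12$ into $\kappa=\tfrac{\ln 2}{2}$. A minor technical point is that everything above is well defined because $X|_{Y=y}$ is absolutely continuous with respect to the marginal distribution of $X$ for every $y$ in the support of $Y$, so the relative entropies appearing are finite and the sums converge.
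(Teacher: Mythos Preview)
Your argument is correct and is the standard derivation of the average-encoding inequality: bound $\hel^2(P,Q)$ by $\kappa\,\mathrm{D}(P\|Q)$ via Jensen applied to $\ln$ together with $-\ln(1-t)\ge t$, then average and use the identity $\mi(X:Y)=\Exp_{y\gets Y}\mathrm{D}(X|_{Y=y}\,\|\,X)$. There is nothing to compare against here, since the paper does not supply its own proof of this proposition; it is quoted as a known tool and the reader is referred to~\cite{jn10}.
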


The Hellinger distance also generalizes the cut-and-paste property of deterministic protocols
to randomized ones.
\begin{proposition}[Cut and paste]
Let $P$ be a $2$-player randomized  protocol. Let
$\Pi(x,y)$ denote the random variable representing the transcript in
$P$ when Players $A,B$ have resp. inputs $x,y$. Then $\hel(\Pi(x,y),\Pi(u,v)) = \hel(\Pi(x,v),\Pi(u,y))$,
for all pairs  $(x,y)$ and $(u,v)$.
\label{cut-and-paste}
\end{proposition}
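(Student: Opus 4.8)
The plan is to reduce the claimed identity to the standard product (``rectangle'') structure of transcripts of two-party private-coin protocols, after which the result falls out from a one-line symmetry observation.

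First I would rewrite the squared Hellinger distance in terms of the Bhattacharyya affinity: for distributions $\mu,\nu$ on a common countable set, $\hel^2(\mu,\nu)=1-\sum_{\omega}\sqrt{\mu(\omega)\,\nu(\omega)}$. Hence it suffices to prove that the quantity $\mathrm{aff}(x,y;u,v):=\sum_{\pi}\sqrt{\Pr[\Pi(x,y)=\pi]\,\Pr[\Pi(u,v)=\pi]}$ is invariant under the exchange $(x,y;u,v)\mapsto(x,v;u,y)$.

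Second, I would establish the factorization of transcript probabilities. Fix a full transcript $\pi$, consisting of the public coin string (if any) followed by the messages $m_1,m_2,\dots,m_k$, where odd-indexed messages are sent by $A$ and even-indexed ones by $B$. Since the next message of a player is obtained by applying a fixed randomized function to that player's input, its private coins, and the portion of $\pi$ already on the board, marginalising over the two independent private-coin sources gives $\Pr[\Pi(x,y)=\pi]=\gamma(\pi)\,\alpha_x(\pi)\,\beta_y(\pi)$, where $\gamma(\pi)$ is the probability of the recorded public string, $\alpha_x(\pi)$ collects the conditional probabilities of the messages emitted by $A$ (a function of $x$ and of $\pi$ only), and $\beta_y(\pi)$ collects those emitted by $B$ (a function of $y$ and of $\pi$ only); here one uses that a message of $A$ never depends on $B$'s private coins and vice versa, so no residual randomness is left over.

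Finally, substituting this factorization into the affinity yields $\mathrm{aff}(x,y;u,v)=\sum_{\pi}\gamma(\pi)\sqrt{\alpha_x(\pi)\,\beta_y(\pi)\,\alpha_u(\pi)\,\beta_v(\pi)}$. The term under each square root is a product of four nonnegative factors, two of which ($\alpha_x(\pi),\alpha_u(\pi)$) depend only on the first inputs and two of which ($\beta_y(\pi),\beta_v(\pi)$) depend only on the second inputs, so re-pairing $x$ with $v$ and $u$ with $y$ leaves the summand unchanged term by term. Therefore $\mathrm{aff}(x,y;u,v)=\mathrm{aff}(x,v;u,y)$, which is the desired equality. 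The only step with real content is the factorization, and the point to be careful about there is the bookkeeping: one must take the transcript to include every message and the public randomness, so that each conditional message probability is genuinely a function of a single player's input together with the already-recorded history, making the product structure exact rather than approximate.
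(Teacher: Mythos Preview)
Your argument is correct and is exactly the standard proof of the cut-and-paste lemma (via the rectangle factorization $\Pr[\Pi(x,y)=\pi]=\gamma(\pi)\,\alpha_x(\pi)\,\beta_y(\pi)$ and the resulting symmetry of the Bhattacharyya affinity). The paper, however, does not supply its own proof of this proposition: it merely states it as a known tool and refers the reader to~\cite{jn10} for a proof, so there is nothing in the paper to compare your approach against beyond noting that you have reproduced the classical argument originating with Bar-Yossef, Jayram, Kumar and Sivakumar.
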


Last we use that the square of the Hellinger distance is convex, and the following
connexion to the more convention $\ell_1$-distance:
$
\hel(X,Y)^2 \leq \frac{1}{2}\lVert X-Y \rVert_1 \leq \sqrt{2} \hel(X,Y)
$.
For a reference on these results, see~\cite{jn10}.

\section{Lower bound for $\PQT$}\label{sec:lb}
The proof of our lower bound consists in first translating it into
a $3m$-player communication problem, for some large $m$;
then reducing the number of players to $3$ using the information cost approach;
and last studying the base case of $3$ players using information theory arguments.

\subsection{From streaming algorithms to communication protocols}
In this section, we write $a$ instead of $\ins(a)$ and $\bar{a}$ instead of $\ext(a)$.
Consider the following set of hard instances of size $N=(2n+2)m$:
\begin{quote}
$\RD{m}{n}$  (see LHS of Figure~\ref{hard_instance})
\begin{itemize}
\item For $i=1,2,\dots,m$, repeat the following motif:
\begin{itemize}
\item For $j=1,2,\dots,n$, insert either $v_{i,j}=3(ni-j)$ or
  $v_{i,j}=3(ni-j)+2$
\item Insert either $a_i=3(ni-(k_i-1))+1$  or $a_i=3(ni-k_i)+1$, for some $k_i \in \{2,\dots,n\}$
\item Extract $v_{i,1},v_{i,2},\dots,v_{i,k_i-1},a_i$ in
  decreasing order
\end{itemize}
\item Extract everything left in decreasing order
\end{itemize}
\end{quote}

Observe that such an instance is in $\Col$. 
One can compute the timestamps for each value by maintaining only
$\Order(\log N)$ additionnal bits.
Last, there is only one potential error in each motif
that can make it outside of $\PQTs$. 
Indeed, $v_{i,1},v_{i,2},\dots,v_{i,k_i-1},a_i$ are in decreasing order up to a switch between $a_i$ and $v_{i,k_i-1}$.

\begin{figure}
\hspace*{1cm}{
\begin{tikzpicture}[thick,scale=0.52, every node/.style={scale=0.7}]
\node at (0.75,-0.5) {\phantom{$A_1$}};
\draw [->,thick] (-0.5,-0.5) -- (-0.5,12.5);
\draw [->,thick] (-1,0) -- (11,0);

\draw[xstep=0.5,ystep=0.5,gray!40,very thin] (-0.7,-0.2) grid
(7.9,12.4);
\draw[xstep=100,ystep=0.5,gray!40,very thin,dashed] (8,-0.2) grid
(8.5,12.4);
\draw[xstep=0.5,ystep=0.5,gray!40,very thin] (8.6,-0.2) grid (10.9,12.4);

\node  at (1.5,1) {$2$};
\node  at (1,2.5) {$5$};
\node  at (0.5,4) {$8$};
\node  at (0,4.5) {$9$};
\node[draw,circle]  at (2,3.5) {$7$};
\node  at (2.5,4.5) {$\overline{9}$};
\node  at (3,4) {$\overline{8}$};
\node  at (3.5,3.5) {$\overline{7}$};
\node  at (5.5,7) {$14$};
\node  at (5,8.5) {$17$};
\node  at (4.5,9) {$18$};
\node  at (4,11.5) {$23$};
\node[draw,circle]  at (6,8) {$16$};
\node  at (6.5,11.5) {$\overline{23}$};
\node  at (7,9) {$\overline{18}$};
\node  at (7.5,8) {$\mathbf{\overline{16}}$};

\node at (9,8.5) {$\mathbf{\overline{17}}$};
\node at (9.5,7) {$\overline{14}$};
\node at (10,2.5) {$\overline{5}$};
\node at (10.5,1) {$\overline{2}$};

\draw (-0.7,5.75) -- (7.9,5.75);
\draw[dashed] (8,5.75) -- (8.5,5.75);
\draw (8.6,5.75) -- (11,5.75);
\draw (-0.7,11.85) -- (7.9,11.85);
\draw[dashed] (8,11.85) -- (8.5,11.85);
\draw (8.6,11.85) -- (11,11.85);

\node at (-1.5,3) {$i=1$};
\node at (-1.5,2) {$k=3$};
\node at (-1.5,9) {$i=2$};
\node at (-1.5,8) {$k=3$};
\node at (-1.5,12.5) {$i=3$};

\end{tikzpicture}
}
{
\begin{tikzpicture}[thick,scale=0.52, every node/.style={scale=0.7}]

\fill[fill=gray!20,fill opacity = 0.5] (-0.5,0) -- (-0.5,5.75) --
(1.75,5.75) -- (1.75,0); 
\fill[fill=gray!50,fill opacity = 0.5] (3.75,0) -- (3.75,5.75) --
(1.75,5.75) -- (1.75,0);
\fill[fill=gray!20,fill opacity = 0.5] (3.75,11.85) -- (3.75,5.75) --
(5.75,5.75) -- (5.75,11.85);
\fill[fill=gray!50,fill opacity = 0.5] (5.75,11.85) -- (5.75,5.75) --
(8,5.75) -- (8,11.85);
\fill[fill=gray!50,fill opacity = 0.5] (8.5,5.75) -- (8.5,11.85) --
(9.75,11.85) -- (9.75,5.75);
\fill[fill=gray!50,fill opacity = 0.5] (10.75,0) -- (10.75,5.75) --
(9.75,5.75) -- (9.75,0);

\draw [->,thick] (-0.5,-0.5) -- (-0.5,12.5);
\draw [->,thick] (-1,0) -- (11,0);

\draw[xstep=0.5,ystep=0.5,gray!60,very thin] (-0.7,-0.2) grid
(7.9,12.4);
\draw[xstep=100,ystep=0.5,gray!60,very thin,dashed] (8,-0.2) grid
(8.5,12.4);
\draw[xstep=0.5,ystep=0.5,gray!60,very thin] (8.6,-0.2) grid (10.9,12.4);

\node  at (1.5,1) {$2$};
\node  at (1,2.5) {$5$};
\node  at (0.5,4) {$8$};
\node  at (0,4.5) {$9$};
\node  at (2,3.5) {$7$};
\node  at (2.5,4.5) {$\overline{9}$};
\node  at (3,4) {$\overline{8}$};
\node  at (3.5,3.5) {$\overline{7}$};
\node  at (5.5,7) {$14$};
\node  at (5,8.5) {$17$};
\node  at (4.5,9) {$18$};
\node  at (4,11.5) {$23$};
\node  at (6,8) {$16$};
\node  at (6.5,11.5) {$\overline{23}$};
\node  at (7,9) {$\overline{18}$};
\node  at (7.5,8) {$\mathbf{\overline{16}}$};

\node at (9,8.5) {$\mathbf{\overline{17}}$};
\node at (9.5,7) {$\overline{14}$};
\node at (10,2.5) {$\overline{5}$};
\node at (10.5,1) {$\overline{2}$};

\draw (-0.7,5.75) -- (7.9,5.75);
\draw[dashed] (8,5.75) -- (8.5,5.75);
\draw (8.6,5.75) -- (11,5.75);
\draw (-0.7,11.85) -- (7.9,11.85);
\draw[dashed] (8,11.85) -- (8.5,11.85);
\draw (8.6,11.85) -- (11,11.85);

\node at (-1.5,2.5) {$i=1$};
\node at (-1.5,8.5) {$i=2$};
\node at (-1.5,12.5) {$i=3$};
\node at (0.75,-0.5) {$A_1$};
\node at (2.75,-0.5) {$B_1$};
\node at (4.75,5.25) {$A_2$};
\node at (6.75,5.25) {$B_2$};
\node at (9.25,5.25) {$C_2$};
\node at (10.25,-0.5) {$C_1$};

\end{tikzpicture}
}\vspace*{-5mm}
\caption{
Left: \label{hard_instance} Instance of $\RD{m}{4}$  with one error:
17 is extracted after 16. Insertions $a_i$ are circled.
Right: \label{hard_instance_2}Cutting $\RD{m}{4}$ into $3m$ pieces to
make it a communication problem. Players' input are within each corresponding region.}
\end{figure}
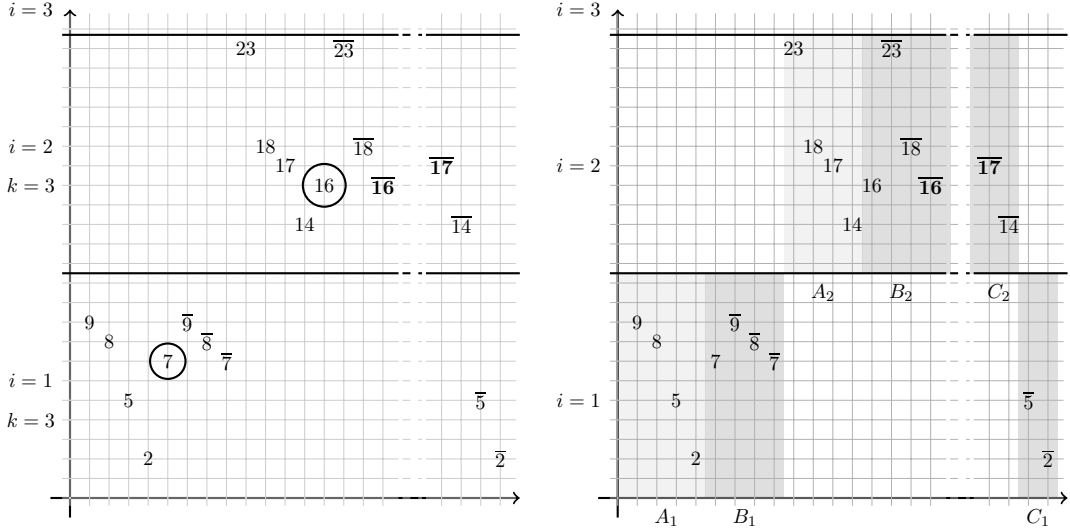

Given such an instance as a stream, an algorithm for $\PQTs$ must decide
if an error occurs between $\overline{a_i}$ and $\overline{v_{i,k_i}}$, for some $i$.
Intuitively, if the memory space is less than $\eps n$, for a small enough constant $\eps>0$, then
the algorithm cannot remember all the values $(v_{i,j})_j$ when $a_i$ is extracted,
and therefore cannot check a potential error with $a_i$.
The next opportunity is during the last sequence of extractions.
But then, the algorithm has to remember all values $(a_i)_i$, which is again impossible
if the memory space is less than $\eps m$.

In order to formalize this intuition, Lemma~\ref{stream-comm} (proof in Appendix~\ref{app:lb})
first translates our problem into a communication one between $3m$ players
as shown on the RHS of Figure~\ref{hard_instance_2}. Then we analyze its complexity
using information theory arguments in Section~\ref{lb:com}.

Any insertion and extraction of an instance in $\RD{m}{n}$ can be described by its index and a single bit. 
Let $x_i[j] \in\{0,1\}$ such that $v_{i,j} = 3(ni-j) +
2x_i[j]$. Similarly, let $d_i \in \{0,1\}$ such that
$a_i=3(ni-k_i)+1+3d_i$. 
For simplicity, we write $\mathbf{x}$ instead of $(x_i)_{1 \leq i \leq
  m}$. Similarly, we use the notations $\mathbf{k}$ and $\mathbf{d}$. 
Then our related communication problem is:
\begin{quote}
$\HAI(m,n)$
\begin{itemize}
\item Input for players $(A_i,B_i,C_i)_{1\leq i\leq m}$:
\begin{itemize}
\item Player $A_i$ has a sequence $x_i \in \{0,1\}^n$
\item Player $B_i$ has $x_i[1,k_i-1]$, with $k_i \in \{2,\dots,n\}$ and $d_i \in \{0,1\}$
\item Player $C_i$ has $x_i[k_i,n]$
\end{itemize}
\item Output: $f_m(\mathbf{x},\mathbf{k},\mathbf{d})=
\bigvee_{i=1}^{m}f(x_i,k_i,d_i)$, where $f(x,k,d)=[(d=0) \wedge (x[k]=1)]$
\item Communication settings:
\begin{itemize}
\item One round: each player  sends a message to the next player
  according to the diagram $A_1 \rightarrow B_1 \rightarrow A_2 
  \rightarrow \dots \rightarrow B_m \rightarrow C_m \rightarrow C_{m-1}
  \rightarrow \dots \rightarrow C_1$.
\item Multiple rounds: If there is at least one round left, $C_1$
  sends a message to $A_1$, and then players continue with the next
  round. 
\end{itemize}
\end{itemize}
\end{quote}

\begin{lemma}
Assume there is a  $p$-pass randomized streaming algorithm 
for deciding if an instance of $\RD{n}{m}$ is in $\PQTs(3mn)$
with memory space $s(m,n)$ and bounded error $\eps$.
Then there is a  $p$-round randomized protocol for $\MHAI{n}{m}$ with bounded error $\eps$ 
such that each message has size at most $s(m,n)$.
\label{stream-comm}
\end{lemma}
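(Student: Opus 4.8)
The plan is to simulate the streaming algorithm by the communication protocol in the natural way, mapping each pass of the algorithm to one round of the protocol, and showing that the memory state transferred between players is exactly a message of the required size.

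First I would describe the partition of the stream $\RD{m}{n}$ into $3m$ consecutive blocks, following the right-hand side of Figure~\ref{hard_instance_2}. Concretely, block $A_i$ carries the insertions $v_{i,1},\dots,v_{i,n}$, block $B_i$ carries the insertion $a_i$ together with the extractions $\overline{v_{i,1}},\dots,\overline{v_{i,k_i-1}},\overline{a_i}$, and block $C_i$ carries the ``tail'' extractions that belong to motif $i$ but happen during the final decreasing sweep (namely $\overline{v_{i,k_i}},\dots,\overline{v_{i,n}}$, and $\overline{v_{i,1}},\dots,\overline{v_{i,k_i-1}}$ are not here). The key observation to check is that the bit-level inputs $x_i$, $x_i[1,k_i-1]$, $d_i$, $x_i[k_i,n]$ held by $A_i,B_i,C_i$ are precisely the information needed to reconstruct their respective blocks of the stream (the explicit values $3(ni-j)+2x_i[j]$ and $3(ni-k_i)+1+3d_i$ are determined by the index, which every player knows, plus those bits). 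In particular, since the tail of the stream is just ``extract everything left in decreasing order,'' each player can produce the letters of its block locally. The ordering of the blocks along the stream matches the communication diagram $A_1\to B_1\to A_2\to\cdots\to B_m\to C_m\to C_{m-1}\to\cdots\to C_1$, because the $C_i$ blocks occur in reverse index order at the end.

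Next I would define the protocol. Players use the public coins of the protocol to fix the random coins of the streaming algorithm (the algorithm is randomized with private/public coins, but since all coins can be made public without changing the error, we may simply treat them as shared randomness sampled from the public coins). In round one, $A_1$ runs the streaming algorithm on its block starting from the initial memory configuration, and sends the resulting memory state to $B_1$; $B_1$ resumes the algorithm from that state on its block and forwards the new state, and so on down the diagram to $C_1$. At the end of a round, $C_1$ holds the memory state of the algorithm after a full pass over the stream; if another pass is to be performed, $C_1$ sends this state back to $A_1$ and the next round begins, so $p$ passes become $p$ rounds. At the end of the last round, $C_1$ (which is also the last player in the diagram) outputs whatever the streaming algorithm outputs. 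Every message is a snapshot of the algorithm's memory, hence of size at most $s(m,n)$ bits. Correctness follows because the concatenation of the blocks is exactly the input stream, so the protocol's output equals the algorithm's output on that stream, which is correct with probability at least $1-\eps$ over the shared coins; and $f_m(\mathbf{x},\mathbf{k},\mathbf{d})=1$ iff some motif $i$ has $d_i=0$ and $x_i[k_i]=1$, which by the discussion after the definition of $\RD{m}{n}$ is exactly the condition that the instance fails to be in $\PQTs(3mn)$.

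The main obstacle, and the point that deserves the most care, is the bookkeeping at the block boundaries: one has to verify that the blocks are genuinely contiguous and that no player needs information outside its declared input to emit its block — in particular that $B_i$ can produce $\overline{v_{i,1}},\dots,\overline{v_{i,k_i-1}}$ knowing only $x_i[1,k_i-1]$ and $k_i$, and that the split point $k_i$ is consistently available to $B_i$ and $C_i$. One also has to be slightly careful that the streaming algorithm, as a machine, can be paused and resumed from an arbitrary memory configuration with no hidden state beyond the $s(m,n)$ bits and the (shared) coins, and that its preprocessing/postprocessing steps are absorbed by the first and last players respectively. None of this is deep, but it is the part where an error in the reduction would hide, so I would state the block decomposition explicitly and check the boundary cases before moving on.
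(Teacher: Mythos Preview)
Your proposal is correct and follows essentially the same approach as the paper: partition the $\RD{m}{n}$ stream into $3m$ contiguous pieces matching the players' inputs, then have the players simulate the streaming algorithm by passing its $s(m,n)$-bit memory state along the communication diagram, one pass per round. Your write-up is in fact more detailed than the paper's own proof, which is a short sketch; the boundary checks you flag (that $B_i$ can reproduce the sorted extraction block from $x_i[1,k_i-1],k_i,d_i$, that $C_i$ can recover $k_i$ from the length of $x_i[k_i,n]$, and that the $C_i$ blocks appear in reverse index order in the final sweep) are exactly the points one should verify, and they all go through.
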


We are now ready to give the structure of the proof of
Theorem~\ref{multilowerbound}, which has techniques based on
information theory. Define the following collapsing distribution $\mu_0$ of hard inputs $(x,k,d)$, 
encoding instances of $\RD{1}{n}$, where $f$ always takes value~$0$. 
Distribution $\mu_0$ is such that $(x,k)$ 
is uniform on $\{0,1\}^n \times \{2,\dots,n\}$ and, given $x,k$, the
bit $d\in\{0,1\}$ is uniform if $x[k]=0$, and $d=1$ if $x[k]=1$.
From now on, $(X,K,D)$ are random variables distributed according to $\mu_0$,
and $(x,k,d)$ denote any of their values.

Then the proof of Theorem~\ref{multilowerbound} consists
in studying the information cost   of any communication protocol for $\MHAI{n}{m}$, which is a lower bound on
its communication complexity. 
Using that $\mu_0$ is collapsing for $f$, Lemma~\ref{multi_dir_sum}
establishes a direct sum on the information cost of $\MHAI{n}{m}$.
Then, even if $f$ is constant on $\mu_0$,
Lemma~\ref{lower_bound_epsilon} lower bounds the information cost of a single instance of $\MHAI{n}{1}$.
\begin{proof}[Proof of Theorem~\ref{multilowerbound}]
Let $n,N$ be positive integers such that $N=(2n+2)n$.
Assume that there exists a  $p$-pass randomized algorithm that recognizes $\PQTs(3N/2)$,
with memory space $\alpha n$ and bounded error $\eps$, for inputs of size $N$.
Then, by Lemma~\ref{stream-comm}, there a 
 $p$-round randomized protocol $P$ for $\MHAI{n}{n}$ such that each
message has size at most $\alpha n$. By Lemma~\ref{multi_dir_sum},
one can derive from $P$ another  $(p+1)$-round randomized protocol $P'$ for $\MHAI{n}{1}$
with bounded error $\eps$, and transcript $\Pi'$ satisfying
$|\Pi'|\leq 3(t+1)\alpha n$ and $\max\left\{\mi(D:\Pi'_B|X,K),
\mi(K,D:\Pi'_C|X)\right\} \leq (p+1)\alpha$.
Then by Lemma~\ref{lower_bound_epsilon},
$3(p+1)\alpha \geq (1-2\eps)/10$,
that is $\alpha=\Order(1/p)$, concluding the proof.
\end{proof}

\subsection{Communication complexity lower bound}\label{lb:com}

We first reduce the general problem $\MHAI{n}{m}$ with $3m$ players
to a single instance of $\MHAI{n}{1}$ with 3 players.
In order to do so we exploit the direct sum property of the information cost.
The use of a collapsing distribution where $f$ is always $0$ is crucial.
\begin{lemma}
If there is a $p$-round randomized  protocol $P$
for $\MHAI{n}{m}$ with bounded error $\eps$ and messages of
size at most $s(m,n)$, then there is a $(p+1)$-round
randomized  protocol $P'$ for $\MHAI{n}{1}$ with bounded
error $\epsilon$, and transcript $P'$ satisfying
$ |\Pi'| \leq 3(p+1)s(m,n)$ and
$\max \left\{ \mi(D:\Pi'_B|X,K),
\mi(K,D:\Pi'_C|X) \right\} \leq \frac{p+1}{m}s(m,n)$.
\label{multi_dir_sum}
\end{lemma}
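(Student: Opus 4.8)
The plan is to embed a single instance of $\MHAI{n}{1}$ into a random coordinate $I\in\{1,\dots,m\}$ of the $m$-fold problem, running $P$ on an input where coordinate $I$ carries the real instance and every other coordinate $j\neq I$ carries an independent draw from the collapsing distribution $\mu_0$. First I would have the three players $A,B,C$ of $P'$ pick $I$ uniformly using public coins, and sample $(X_j,K_j,D_j)\sim\mu_0$ for $j\neq I$ using public coins as well — this is the key point that makes the simulation possible with \emph{no} extra communication: because $\mu_0$ puts all its mass on inputs where $f=0$, every off-coordinate is a ``collapsing'' dummy, and crucially the three pieces $x_j$, $x_j[1,k_j-1]\,d_j$, $x_j[k_j,n]$ of a $\mu_0$-sample are \emph{consistent}, so the players can generate them jointly from shared randomness and each extract exactly the part they are supposed to hold. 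Player $A$ places its real input $x$ in slot $I$ as $A_I$'s input and fills in the $A_j$ slots for $j\neq I$ from the public sample; $B$ and $C$ do likewise. Then $P'$ simply runs the protocol $P$ on this composite $3m$-player instance, with the three real players playing the roles of all $3m$ virtual players along the communication ring $A_1\to B_1\to A_2\to\cdots\to B_m\to C_m\to\cdots\to C_1$: $A$ simulates the initial stretch up to the boundary before slot $I$, hands off, $B$ (and in fact $A$ again, alternating) simulate the middle, and $C$ simulates the tail. Since $f_m=\bigvee_j f(x_j,k_j,d_j)$ and all off-coordinates contribute $0$, the output of $P$ on the composite input equals $f(x,k,d)$, so $P'$ has the same bounded error $\eps$.

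The round/size bookkeeping is routine: threading the $3m$-player ring through $3$ real players means one round of $P$ becomes at most one round of $P'$ plus possibly one extra handoff to re-align who speaks first, giving $p+1$ rounds; the transcript $\Pi'$ consists of at most $3(p+1)$ messages of $P$, each of size $\le s(m,n)$, hence $|\Pi'|\le 3(p+1)s(m,n)$.

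The substantive step is the information bound $\max\{\mi(D:\Pi'_B\mid X,K),\ \mi(K,D:\Pi'_C\mid X)\}\le \frac{p+1}{m}s(m,n)$. Here I would use the standard direct-sum / averaging argument over the random choice of $I$. Write $\mathbf{X}=(X_1,\dots,X_m)$ etc., all i.i.d.\ $\mu_0$, and let $\Pi$ be the transcript of $P$ on this collapsing input distribution. The message-passing structure of the ring is what lets this work: the messages ``crossing'' coordinate $j$ from the $B$-side, call them $\Pi^{(j)}_B$, are what $P'$ reports as $\Pi'_B$ when $I=j$, and similarly for $\Pi^{(j)}_C$. A superadditivity argument — using that the $(X_j,K_j,D_j)$ are independent and that conditioning on $(\mathbf{X},\mathbf{K})$ (resp.\ $\mathbf{X}$) makes the per-coordinate contributions add up inside $\mi(\mathbf{D}:\Pi\mid\mathbf{X},\mathbf{K})$ (resp.\ $\mi(\mathbf{K},\mathbf{D}:\Pi\mid\mathbf{X})$) — gives $\sum_{j=1}^m \mi(D_j:\Pi^{(j)}_B\mid \mathbf{X},\mathbf{K},\text{past})\le \h(\Pi)\le (p+1)\cdot 3\cdot\frac{s(m,n)}{3}$ when one is careful, and in any case $\le (p+1)s(m,n)$ after absorbing constants into the definition of $\Pi'_B$ as only the $B$-side messages. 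Averaging over $I$ uniform yields the $\frac{1}{m}$ factor. I expect the main obstacle to be making the conditioning in this chain-rule expansion airtight: one has to choose the conditioning variables (the already-sent messages, the public coins, the off-coordinate inputs) so that, coordinate by coordinate, the ``hidden'' input $D_j$ (resp.\ $(K_j,D_j)$) is independent of everything that has been revealed before $P'$'s simulated protocol reaches slot $j$, so that Shearer-type / subadditivity of entropy applies cleanly — this is exactly where the collapsing property of $\mu_0$ and the left-to-right-then-right-to-left geometry of the ring are both used, and it is the only place real care is needed.
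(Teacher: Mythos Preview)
Your high-level plan---embed the real instance at a uniformly random coordinate $I$, fill the remaining coordinates with independent $\mu_0$-samples so that $f_m$ collapses to $f$, then average---is the same strategy the paper uses. But two concrete choices in your execution break the information bound, and they are exactly the places where the paper does something specific.

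First, you sample \emph{all} of $(X_j,K_j,D_j)_{j\neq I}$ from \emph{public} coins. Since by convention $\Pi'_B$ and $\Pi'_C$ include the public coins, this forces the off-coordinate $(K_j,D_j)$'s into those transcripts, and the averaging step then reads
\[
\mi(K,D:\Pi'_C\mid X)\;=\;\frac1m\sum_{i=1}^m \mi\bigl(K_i,D_i:\Pi_{C_1}\,\big|\,\mathbf X,(K_j,D_j)_{j\neq i}\bigr).
\]
This sum is \emph{not} bounded by $|\Pi_{C_1}|$: the chain rule gives $\sum_i \mi(K_i,D_i:\Pi_{C_1}\mid\mathbf X,(K_j,D_j)_{j<i})\le|\Pi_{C_1}|$, and Fact~\ref{fact:mi} (independence of $(K_i,D_i)$ from $(K_j,D_j)_{j>i}$ given $\mathbf X$) goes the \emph{wrong} way when you try to pass from conditioning on $j<i$ to conditioning on all $j\neq i$. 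A two-bit XOR example already shows the putative bound can fail by a factor of $m$. The paper avoids this by making only the $x_i$'s public and assigning the $(k_i,d_i)$'s as \emph{private} coins---$(k_i,d_i)_{i>j}$ to $B$ and $(k_i,d_i)_{i<j}$ to $C$---so that they never enter $\Pi'_B$ or $\Pi'_C$. This asymmetric private split is not a technicality you can clean up later; it is what makes the chain-rule/Fact~\ref{fact:mi} combination go through.

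Second, you upper-bound by $\h(\Pi)$, the entropy of the \emph{full} $3m$-player transcript, which is of order $mp\cdot s(m,n)$ and kills the factor $1/m$ you gain from averaging. Your remark ``absorb constants into the definition of $\Pi'_B$ as only the $B$-side messages'' is where the real content lies: the paper starts instead from the \emph{single-player} transcripts $\Pi_{C_1}$ and $\Pi_{B_m}$, each of size at most $(p+1)s(m,n)$, and then uses data processing (conditioned on $\mathbf X$ and the appropriate $(K_j,D_j)_{j<i}$ or $(D_j)_{j>i}$) to replace them by $\Pi_{B_{j-1}}$ and $\Pi_{C_{j+1}}$ respectively. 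That data-processing step in turn requires that the off-coordinate $(k_i,d_i)$'s be held privately by the player doing the internal simulation---again the private-coin split above. Your simulation assignment (``$A$ simulates the initial stretch'') is also different from the paper's ($A$ simulates \emph{only} $A_j$; $C$ simulates everything with index $<j$, $B$ everything with index $>j$), and that assignment is what makes $\Pi'_C$ equal to $\Pi_{B_{j-1}}$ plus public coins, which is exactly the object the data-processing step targets.
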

\begin{proof}[Sketch of proof]
Given a protocol $P$, we show how to construct another protocol $P'$ for any instance $(x,k,d)$ of $\MHAI{n}{1}$.
In order to avoid any confusion, we denote by $A$, $B$ and $C$ the three players of $P'$, 
and by $(A_i,B_i,C_i)_i$ the ones of $P$.
\begin{quote}
Protocol $P'$
\begin{itemize}
\item Using public coins, all players generate uniformly at random $j \in \{1,\dots,m\}$,
and $x_i\in\{0,1\}^n$ for $i \not = j$
\item Players $A$, $B$ and $C$ set respectively their inputs to the ones of $A_j, B_j, C_j$
\item For all $i>j$, Player $B$ generates, using its private coins,
 uniformly at random $k_i\in\{2,\ldots,n\}$,
  and then it generates uniformly at random $d_i$ such that $f(x_i,k_i,d_i)=0$
\item For all $i<j$, Player $C$ generates, using its private coins,
 uniformly at random $k_i\in\{2,\ldots,n\}$,
  and then it generates uniformly at random $d_i$ such that $f(x_i,k_i,d_i)=0$
\item Players $A$, $B$ and $C$ run $P$ as follows.
$A$ simulates $A_j$ only, $B$ simulates $B_j$ and $(A_i,B_i,C_i)_{i>j}$, and
$C$ simulates $C_j$ and $(A_i,B_i,C_i)_{i<j}$.
\end{itemize} 
\end{quote}
Observe that $A$ starts the protocol if $j=1$, and $C$ starts otherwise.
Moreover $C$ stops the simulation after $p$ rounds if $j=1$, and after $p+1$ rounds otherwise.
For all $i \not = j$, entries are generated such that $f(x_i,k_i,a_i)=0$, therefore
$f_m(\mathbf{X},\mathbf{k},\mathbf{d})=f(x_j,k_j,a_j)=f(x,k,a)$,
and $P'$ has the same bounded error than $P$.

Then we show in Appendix~\ref{app:lb} that $P'$ satisfies the required conditions of the lemma.
\end{proof}

We now prove a trade-off between the bounded error of a protocol for
a single instance of $\MHAI{n}{1}$ and its information cost.
The proof involves some of the tools of~\cite{jn10} but with some additional obstacles to apply them.
The inherent difficulty is due to that we have $3$ players whereas
the cute-and-paste property applies to $2$-player protocols.
Therefore we have to group $2$ players together.

Given some parameters $(x,k,a)$ for an input of $\MHAI{n}{1}$, we
denote by $\Pi(x,k,a)$ the random variable describing the transcript $\Pi$ of our protocol.
We start by two lemmas exploiting the average encoding theorem (proofs in Appendix~\ref{app:lb}).

\begin{lemma}
Let $P$ be a randomized  protocol for $\MHAI{n}{1}$ 
with transcript $\Pi$ satisfying
$ |\Pi| \leq \alpha n$ and
$\mi(K,D:\Pi_C|X) \leq \alpha$.
Then
$$\Exp_{x[1,l-1],l} \hel^2(\Pi(x[1,l-1]0X[l+1,n],l,1),  \Pi(x[1,l-1]1X[l+1,n],l,1))
\leq 28\alpha,$$
where $l\in[\frac{n}{2}+1,n]$ and $x[1,l-1]$ are uniformly distributed.
\label{lower_bound_ac}
\end{lemma}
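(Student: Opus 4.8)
The plan is to deduce Lemma~\ref{lower_bound_ac} from the Average encoding proposition applied to the pair of random variables $(K,D)$ and $\Pi_C$, conditioned on $X$. First I would unpack what the transcript looks like on the relevant inputs. On inputs of the form $(x,k,1)$, the value $f(x,k,1)=0$, so these are exactly the ``collapsing'' inputs that the distribution $\mu_0$ is supported on (together with the $d=0$ ones when $x[k]=0$). The idea is that under $\mu_0$, conditioned on $X=x$ and on the event $X[K]=1$ so that $D=1$ is forced, the transcript $\Pi_C$ (the messages emanating from $C$, together with the public coins) carries little information about $K$, because $\mi(K,D:\Pi_C|X)\le\alpha$ and $D$ is a deterministic function of $K$ on this event. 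Concretely I would argue that $\mi(K:\Pi_C|X, X[K]=1)$ is controlled by $\mi(K,D:\Pi_C|X)\le\alpha$ up to a constant factor coming from the probability of the conditioning event (which is bounded below by a constant since $X$ is uniform on $\{0,1\}^n$ and $K$ on $\{2,\dots,n\}$, so $\Pr[X[K]=1]=1/2$).

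Next I would invoke Average encoding: it gives $\Exp_{k}\hel^2\bigl(\Pi_C|_{X=x,K=k,X[k]=1},\ \Pi_C|_{X=x,X[K]=1}\bigr)\le\kappa\,\mi(K:\Pi_C|X=x,X[K]=1)$, and averaging over $x$ as well yields a bound of the form $O(\alpha)$ on the average squared Hellinger distance between the conditional transcript-from-$C$ distribution for a specific $k$ and the ``averaged'' one. By the triangle inequality for Hellinger distance (valid since $\hel$ is a metric) this bounds $\Exp\,\hel^2(\Pi_C(\ldots,k,1),\Pi_C(\ldots,k',1))$ by $O(\alpha)$ for two independent copies $k,k'$; but what I actually need is a statement about flipping one coordinate of $x$, not about changing $k$. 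Here is where the combinatorial structure of the players' inputs enters: $C$'s input is exactly $x[k,n]$, and the positions $l$ in the range $[\frac n2+1,n]$ are (for $k$ roughly uniform in $\{2,\dots,n\}$) with constant probability $\ge k$, hence belong to $C$'s slice. So I would instead apply the cut-and-paste property after grouping $A$ and $B$ into a single ``super-player'' facing $C$: changing $x[l]$ for $l\ge k$ only changes $C$'s input, and by Proposition~\ref{cut-and-paste} the transcript distributions $\Pi(x[1,l-1]0X[l+1,n],l,1)$ and $\Pi(x[1,l-1]1X[l+1,n],l,1)$ — which differ only in $C$'s input bit at position $l$ — are related to the transcript restricted to $C$'s messages via the cut-and-paste identity. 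This is precisely the ``group two players together'' maneuver the text announces.

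The key steps in order: (1) reduce to a two-player setting by fusing $A,B$ against $C$, so cut-and-paste applies; (2) observe that on inputs $(x,k,1)$ and positions $l\ge k$, flipping $x[l]$ flips only $C$'s input, and re-express the target Hellinger distance via cut-and-paste as a distance between transcripts that differ only in a ``far'' coordinate, controllable by $\Pi_C$; (3) bound $\mi(K:\Pi_C\mid X, X[K]=1)$ by $O(\alpha)$ using the hypothesis $\mi(K,D:\Pi_C|X)\le\alpha$, the constant lower bound on $\Pr[X[K]=1]$, and the fact that $D$ is determined by $(X,K)$ on the relevant event; (4) apply Average encoding to $(K,\Pi_C)$ conditioned on $X$ (restricted to the event $X[K]=1$), giving an $O(\alpha)$ bound on the average squared Hellinger distance of $\Pi_C$ as $k$ and the flipped bit vary; (5) combine via the triangle inequality and the convexity of $\hel^2$ to collect the loose constants into the final $28\alpha$. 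I would also need the elementary identity $\hel^2(X,Y)=\hel^2(Y,X)$ and the fact that uniform marginals make the various conditioning events have constant probability, so that dividing by them costs only a constant.

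The main obstacle I expect is step (2)–(3): correctly tracking what ``$\Pi_C$'' means as a random variable and showing that the two transcripts that differ in one far bit of $x$ become, after cut-and-paste in the fused two-player protocol, a pair whose Hellinger distance is exactly the Hellinger distance between conditional distributions of $\Pi_C$ that Average encoding controls. The bookkeeping is delicate because $\Pi_C$ includes the public coins and because the protocol is multi-round, so $C$'s messages depend on messages that passed through $C$ earlier; one must be careful that conditioning on $X=x$ (the part of the input held by the $A$–$B$ super-player) is exactly what decouples the two sides in the cut-and-paste identity. The information-theoretic part (steps 3–4) is, by contrast, a fairly routine application of the Average encoding proposition once the event $X[K]=1$ is isolated, and the only subtlety there is that $D$ is a function of $(K,X[K])$ on that event so $\mi(K:\Pi_C|\cdot)\le\mi(K,D:\Pi_C|\cdot)$ with no loss.
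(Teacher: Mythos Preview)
Your proposal has two genuine gaps.

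First, you claim that ``flipping $x[l]$ for $l\ge k$ only changes $C$'s input.'' This is false: player $A$'s input is the \emph{entire} string $x\in\{0,1\}^n$, so flipping $x[l]$ changes the inputs of both $A$ and $C$. Consequently, grouping $A,B$ against $C$ and invoking cut-and-paste cannot isolate the bit flip on one side, and your step~(2) collapses. (The paper's use of cut-and-paste is in Lemma~\ref{lower_bound_epsilon}, with $A,C$ grouped against $B$; it is not used at all in the proof of Lemma~\ref{lower_bound_ac}.)

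Second, and more fundamentally, you never invoke the first hypothesis $|\Pi|\le\alpha n$. The second hypothesis $\mi(K,D:\Pi_C\mid X)\le\alpha$ only limits how much $\Pi_C$ reveals about $(K,D)$ once $X$ is fixed; it says nothing about how the transcript changes when a bit of $X$ is flipped. Some use of the transcript-length bound is unavoidable here.

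The paper's argument proceeds quite differently. It first extends the bound to $\Pi_{A,C}$ via data processing (since $A$'s messages are a function of $X$, $A$'s private coins, and the messages $A$ receives from $C$). It then introduces an \emph{auxiliary} index $J$ uniform in $[2,n/2]$, disjoint from the range of $L$. Average encoding applied to $\mi(K,D:\Pi_{A,C}\mid X)\le\alpha$ shows $\Pi_{A,C}(x,L,1)$ and $\Pi_{A,C}(x,J,1)$ are close in Hellinger. The bit flip at position $L$ is then handled \emph{at index $J$}, not $L$: the chain rule and the length bound give $\mi(X[L]:\Pi_{A,C}(X,J,1)\mid X[1,L-1])\le\tfrac{2}{n}|\Pi|\le 2\alpha$, and average encoding converts this into the desired Hellinger bound on the bit flip. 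A final triangle inequality swaps $J$ back to $L$, and one passes from $\Pi_{A,C}$ to the full $\Pi$ by noting that, with $K,D,X[1,K-1]$ fixed, $\Pi_B$ is determined by $\Pi_{A,C}$ and $B$'s private coins.
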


\begin{lemma}
Let $P$ be a randomized  protocol for $\MHAI{n}{1}$ 
with transcript $\Pi$ satisfying
$\mi(D:\Pi_B|X,K) \leq \alpha$.
Then
$$\Exp_{x[1,l-1],l} \hel^2(\Pi(x[1,l-1]0X[l+1,n],l,0),  \Pi(x[1,l-1]0X[l+1,n],l,1))
\leq 12\alpha,$$
where $l\in[\frac{n}{2}+1,n]$ and $x[1,l-1]$ are uniformly distributed.
\label{lower_bound_b}
\end{lemma}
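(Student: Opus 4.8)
The plan is to derive this bound directly from the Average Encoding proposition applied to the conditional distribution of the transcript given the "$B$-side" inputs. The hypothesis $\mi(D:\Pi_B\mid X,K)\le\alpha$ says precisely that, conditioned on a fixed value of $(X,K)$, the messages sent by $B$ (together with the public coins) carry little information about $B$'s private bit $D$. First I would fix a value $(x,k)$ of $(X,K)$ drawn from $\mu_0$; by the way $\mu_0$ is defined, whenever $x[k]=0$ the bit $D$ is uniform on $\{0,1\}$, so conditioning further on $x[k]=0$ still leaves $D$ uniform. On this event, Average Encoding gives
\[
\Exp_{y\gets Y}\hel^2\!\big(\Pi_B\big|_{D=y,\,X=x,K=k},\ \Pi_B\big|_{X=x,K=k}\big)\ \le\ \kappa\,\mi(D:\Pi_B\mid X=x,K=k),
\]
and by the triangle inequality (or rather, by first comparing each of $D=0$ and $D=1$ to the average and using that $\hel^2$ satisfies an approximate triangle inequality with constant factor), the distance between the two pure conditionings $D=0$ and $D=1$ is bounded by a constant times the same mutual-information term. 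Averaging over $(x,k)\sim\mu_0$ restricted to $x[k]=0$, and using that this event has probability bounded below by an absolute constant (roughly $1/2$ over $x$), yields a bound of the form $O(\alpha)$ on $\Exp\hel^2(\Pi_B|_{D=0},\Pi_B|_{D=1})$.

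The remaining point is to pass from a statement about $\Pi_B$ (the messages of player $B$ plus public coins) to one about the full transcript $\Pi$, and to re-index the averaging so that it matches the statement: the lemma averages over a uniform $l\in[\tfrac n2+1,n]$ playing the role of $k$, over uniform $x[1,l-1]$, and with the tail $X[l+1,n]$ fresh from the collapsing distribution (which, given $x[l]=0$, is simply uniform since $d$ is then unconstrained). Since $B$'s input in $\HAI$ is exactly $x[1,k-1]$ together with $d$, fixing $x[1,l-1]$ and $l=k$ fixes $B$'s entire input except for $d$, so $\Pi_B|_{D=0}$ and $\Pi_B|_{D=1}$ are exactly the two transcripts-of-$B$ appearing implicitly in the claim. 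To replace $\Pi_B$ by $\Pi$ one notes that a transcript and its restriction to one player's messages are related by a deterministic map only in one direction; the clean way is instead to invoke the cut-and-paste property (Proposition~\ref{cut-and-paste}) after grouping $A$ and $C$ into a single party facing $B$: the resulting two-player protocol has transcript $\Pi$, and Hellinger distances between its transcripts on inputs differing only in $B$'s bit are controlled by the $\Pi_B$ quantity we just bounded, because flipping $d$ changes nothing on the $\{A,C\}$ side. This is where the bookkeeping lives, and where the constant $12$ comes from (a factor $\kappa=\tfrac{\ln 2}{2}$, a factor $2$ from the squared-triangle-inequality for $\hel^2$, a factor from the lower bound on $\Pr[x[k]=0]$, and a factor $2$ from the $\ell_1$/Hellinger conversion if it is used to merge the grouped players).

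The main obstacle I anticipate is precisely this grouping step: the cut-and-paste property is stated for genuine $2$-player protocols, whereas $\MHAI{n}{1}$ is a $3$-player protocol on a directed cycle $A\to B\to C\to A$, so one must argue that collapsing $\{A,C\}$ into one endpoint yields a legitimate $2$-player protocol whose transcript is (a relabelling of) $\Pi$ — the order of messages on the cycle has to be reconciled with the linear back-and-forth of a two-party protocol, and one has to be careful that the public coins, which appear in both $\Pi_B$ and $\Pi_C$, are handled consistently. Everything else (Average Encoding, the approximate triangle inequality for $\hel^2$, and the substitution $k\leftrightarrow l$ with the tail drawn from $\mu_0$) is routine once that reduction is in place. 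The analogous Lemma~\ref{lower_bound_ac} follows the same template but groups $\{A,B\}$ together and uses the hypothesis $\mi(K,D:\Pi_C\mid X)\le\alpha$ in place of $\mi(D:\Pi_B\mid X,K)\le\alpha$, which is why its constant ($28$) is larger — there one is encoding the pair $(K,D)$ rather than a single bit, so an extra union/averaging over $k$ enters.
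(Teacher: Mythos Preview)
Your high-level plan---average encoding, then lift from $\Pi_B$ to $\Pi$---is sound, but the tool you name for the lift is the wrong one. Cut-and-paste (Proposition~\ref{cut-and-paste}) compares $\hel(\Pi(x,y),\Pi(u,v))$ with $\hel(\Pi(x,v),\Pi(u,y))$; when the $\{A,C\}$ side is held fixed and only $B$'s bit flips, this degenerates to the tautology $\hel(\Pi(x,y),\Pi(x,v))=\hel(\Pi(x,v),\Pi(x,y))$ and yields nothing. The fact you actually need is the Markov structure of the protocol: once $(X,K)$ is fixed, the inputs of $A$ and $C$ are determined, so $D$ can influence $\Pi_{A,C}$ only through the messages of $B$; in other words $D\to\Pi_B\to\Pi$ is a Markov chain conditionally on $(X,K)$. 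This is precisely the data processing inequality, not cut-and-paste.

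The paper exploits this one step earlier than you do: it applies data processing at the mutual-information level to conclude $\mi(D:\Pi\mid X,K)\le\mi(D:\Pi_B\mid X,K)\le\alpha$ immediately, and only then invokes average encoding, now directly on the full transcript $\Pi$. No grouping argument, no cut-and-paste, and no $\ell_1$/Hellinger conversion enter this lemma at all; the constant $12$ is simply $32\kappa=16\ln 2<12$, coming from a factor $2$ for restricting $K$ to $[\tfrac n2+1,n]$, a factor $2$ for conditioning on $X[l]=0$, a factor $2$ for fixing each value of $d$, and a final factor $2$ from the squared triangle inequality. (The cut-and-paste property is used, but only later in Lemma~\ref{lower_bound_epsilon}, where one genuinely swaps inputs across the $B$ versus $\{A,C\}$ partition.) Your aside that Lemma~\ref{lower_bound_ac} groups $\{A,B\}$ is also off: there too one groups $\{A,C\}$, and its larger constant stems from the extra hypothesis $|\Pi|\le\alpha n$, which bounds the information $\Pi_{A,C}$ carries about the single bit $X[L]$.
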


We now end with the main lemma which combines both previous ones and
applies the cut-and-paste property, where Players $A,C$ are grouped.
\begin{lemma}
Let $P$ be a randomized  protocol for $\MHAI{n}{1}$ with bounded
error $\epsilon$, and transcript $\Pi$ satisfying
$ |\Pi| \leq \alpha n  $ and
$\max \left\{ I(D:\Pi_B|X,K),
I(K,D:\Pi_C|X) \right\} \leq \alpha$.
Then $\alpha\geq (1-2\eps)/10$.
\label{lower_bound_epsilon}
\end{lemma}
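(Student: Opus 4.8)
The plan is to derive a contradiction from assuming $\alpha$ is too small, by combining Lemmas~\ref{lower_bound_ac} and~\ref{lower_bound_b} with the cut-and-paste property and the bounded-error guarantee of $P$. First I would fix $l\in[\frac{n}{2}+1,n]$ and a prefix $x[1,l-1]$, and consider the four transcripts obtained by evaluating $\Pi$ at the inputs $(x[1,l-1]\,b\,X[l+1,n],\,l,\,d)$ for $b,d\in\{0,1\}$. The key observation is that $f$ takes value $1$ exactly on the instance with $b=1$ (so $x[l]=1$) and $d=0$, and value $0$ on the other three; hence by the bounded-error property the transcript distribution on the $(b,d)=(1,0)$ instance must be far --- in $\ell_1$, hence in Hellinger by the stated inequality $\hel(X,Y)^2\le\frac12\lVert X-Y\rVert_1$ --- from the transcript distribution on any of the three accepting instances, specifically from the $(b,d)=(0,0)$ instance. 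Concretely, one gets $\hel^2\bigl(\Pi(x[1,l-1]0X[l+1,n],l,0),\,\Pi(x[1,l-1]1X[l+1,n],l,0)\bigr)\ge 1-2\eps$ in expectation over $x[1,l-1],l$ (after restricting to the event $X[l]=0$ where the relevant instances actually occur, and using that this event has constant probability under $\mu_0$).

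The crux is to bound that same Hellinger distance from above using the two encoding lemmas, and here the triangle inequality for Hellinger distance together with cut-and-paste is the engine. I would route from $\Pi(x[1,l-1]0X[l+1,n],l,0)$ to $\Pi(x[1,l-1]1X[l+1,n],l,0)$ through intermediate transcripts. Lemma~\ref{lower_bound_b} controls the hop that flips $d$ from $0$ to $1$ (keeping $b=0$), costing $O(\sqrt\alpha)$. Then I need to flip $b$ from $0$ to $1$ while $d=1$: this is where the $3$-player obstacle bites, since Lemma~\ref{lower_bound_ac} gives me the bound on flipping $b$ only for the $(k,d)=(l,1)$ configuration --- which is exactly what I want --- again at cost $O(\sqrt\alpha)$. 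But to actually invoke the flip of the $A$-part of the input while holding fixed the joint $(B,C)$ behavior, one applies Proposition~\ref{cut-and-paste} with Players $A,C$ grouped on one side and $B$ on the other (as the lemma statement foreshadows): cut-and-paste lets me re-express $\hel$ between two transcripts that differ in the grouped $(A,C)$-input and the $B$-input as $\hel$ between transcripts differing in a single coordinate, matching the form handled by the encoding lemmas. Chaining these hops via the triangle inequality yields an upper bound of the shape $c\sqrt\alpha$ for an absolute constant $c$ (tracking the constants $28$ and $12$ through $\hel^2\le\frac12\lVert\cdot\rVert_1$ and the triangle inequality, and through the $X[l]=0$ conditioning, gives the $10$ in the statement).

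Putting the two sides together: $1-2\eps \le (\text{const})\cdot\alpha$, i.e. $\alpha\ge(1-2\eps)/10$, which is the claim. I would organize the write-up as: (i) lower bound on the Hellinger distance from bounded error; (ii) the grouping of $A,C$ and the statement of cut-and-paste in the grouped form; (iii) the triangle-inequality chain of hops, invoking Lemmas~\ref{lower_bound_ac} and~\ref{lower_bound_b}; (iv) collecting constants.

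The main obstacle I anticipate is step (ii)--(iii): making the grouping precise so that cut-and-paste legitimately applies --- one must check that grouping $A$ and $C$ into a single ``super-player'' yields a bona fide $2$-player protocol whose transcript is the same random variable $\Pi$, and that the input flips I need (flip $x[l]$, which lives in $A$'s input; flip $d$, which lives in $B$'s input) are each supported on exactly one side of the $2$-player partition so that the $2$-coordinate cut-and-paste identity can be used to reduce them to the single-coordinate Hellinger distances that the encoding lemmas bound. Managing the conditioning on $X[l]=0$ throughout (so that the instances referenced are actually drawn according to $\mu_0$ as required by the encoding lemmas) is the other bookkeeping hazard, but it only affects constants.
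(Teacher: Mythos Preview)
Your plan has the same skeleton as the paper's: compare the transcripts at $(b,d)=(0,0)$ and $(1,0)$, lower-bound the distance via bounded error, and upper-bound it via Lemmas~\ref{lower_bound_ac} and~\ref{lower_bound_b} together with cut-and-paste for the $(A,C)$-versus-$B$ grouping. But your account of where cut-and-paste enters is off, and as written the chain of hops does not close.

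Your hops go $(0,0)\to(0,1)$ by Lemma~\ref{lower_bound_b}, then $(0,1)\to(1,1)$ by Lemma~\ref{lower_bound_ac}; that second hop needs no cut-and-paste, the lemma gives it directly. You are now at $(1,1)$, one $d$-flip from the target $(1,0)$, and Lemma~\ref{lower_bound_b} does \emph{not} cover that flip (it is stated only for $x[l]=0$). This is exactly where cut-and-paste is used in the paper: having bounded $\hel\bigl(\Pi(\cdot 0\cdot,l,0),\Pi(\cdot 1\cdot,l,1)\bigr)$ by the two hops above, Proposition~\ref{cut-and-paste} with the $(A,C)$-side carrying $b$ and the $B$-side carrying $d$ yields the identical bound on the other diagonal $\hel\bigl(\Pi(\cdot 0\cdot,l,1),\Pi(\cdot 1\cdot,l,0)\bigr)$; one more application of Lemma~\ref{lower_bound_b} then reaches $(0,0)$--$(1,0)$. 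Cut-and-paste does not ``reduce to a single-coordinate difference'': it swaps one diagonal pair for the other, both still differing in both coordinates.

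Two minor corrections. To lower-bound Hellinger from an $\ell_1$ lower bound you need $\tfrac12\lVert\cdot\rVert_1\le\sqrt2\,\hel$, not the inequality you quoted; the paper in fact does the final comparison in $\ell_1$, converting the $\hel^2$ upper bound to $\ell_1$ at the end. And there is no ``conditioning on $X[l]=0$'' to manage: in all four instances $x[l]$ is set explicitly to $0$ or $1$, and the only averaging is over $x[1,l-1]$, $l$, and the free suffix $X[l+1,n]$.
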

\begin{proof}
Let $L$ be a uniform integer random variable in $[\frac{n}{2}+1,n]$.
Remind that we enforce the output of $P$ to be part of $\Pi$.
Therefore, any player, and in particular $B$, can compute $f$ with bounded error $\eps$ given $\Pi$.
Since $f(x[1,l-1]0X[l+1,n],l,0)=0$ and $f(x[1,l-1]1X[l+1,n],l,1)=1$, the error parameter $\eps$ must satisfies
$$\Exp_{x[1,l-1],l} \lVert \Pi(x[1,l-1]0X[l+1,n],l,0) -  \Pi(x[1,l-1]1X[l+1,n],l,0) \rVert_1 \geq 2(1-2\eps).$$
The rest of the proof consists in upper bounding the LHS by  $19\alpha$.

Applying the triangle inequality and that $(u+v)^2\leq 2(u^2+v^2)$ on
the inequalities of Lemmas~\ref{lower_bound_ac} and~\ref{lower_bound_b} gives
$$\Exp_{x[1,l-1],l} \hel^2(\Pi(x[1,l-1]0X[l+1,n],l,0),  \Pi(x[1,l-1]1X[l+1,n],l,1))
\leq 30\alpha.$$
We then apply the cut-and-paste property by considering $(A,C)$ as a single player with transcript $\Pi_{A,C}$.
Therefore
$$\Exp_{x[1,l-1],l} \hel^2(\Pi(x[1,l-1]0X[l+1,n],l,1),  \Pi(x[1,l-1]1X[l+1,n],l,0))
\leq 30\alpha.$$
Combining again with the inequality from Lemma~\ref{lower_bound_b} gives
$$\Exp_{x[1,l-1],l} \hel^2(\Pi(x[1,l-1]0X[l+1,n],l,0),  \Pi(x[1,l-1]1X[l+1,n],l,0))
\leq 42\alpha.$$
Last, we get the requested upper bound by
using the connexion between the Hellinger distance and the
$\ell_1$-distance, and the convexity of the square function.
\end{proof}

\section{Bidirectional streaming algorithm for $\PQ$}\label{sec:algo}
Remember that in this section our stream is given without any timestamps.
Therefore we consider in this section only streams $w$ of $\ins(a),\ext(a)$, where $a\in [0,U]$.
For the sake of clarity, we assume for now that the stream has no duplicate.
Our algorithms can be extended to the general case, but the technical difficulties
shadow the main ideas.

Up to padding we can assume  that $N$ is a power of $2$: we append a sequence of
$\ins(a)\ext(a)\ins(a+1)\ext(a+1)\ldots$ of suitable length, 
where $a$ is large enough so that there is no duplicate
(assuming that $w$ is of even size, otherwise $w\not\in\PQs(U)$).
We use $\Order(\log N)$ bits of memory to store, after the first pass, the number of letters padded. 

We use a hash function based on the one used by the Karp-Rabin algorithm for pattern matching.
For all this section, let $p$ be a prime number in
$\{\max(2U+1,N^{c+1}),\dots,2 \max(2U+1,N^{c+1})\}$, for some fixed constant $c\geq 1$.
Since our hash function is linear we only define it for single insertion/extraction as
\[
\hash(\ins(a))=\alpha^a \mod p, \quad\text{and}\quad \hash(\ext(a))=-\alpha^a \mod p,
\]
where  $\alpha$ is a randomly chosen integer in $[0,p-1]$. 
This is the unique source of randomness of our algorithm.
A hashcode $h$ {\em encodes} a sequence $w$ if $h=\hash(w)$
as a formal polynomial in $\alpha$. In that case we say that $h$ {\em includes} $w[i]$, for all $i$.
Moreover $w$ is {\em balanced} if the same integers have been inserted and extracted.
In that case it must be that $h=0$. 
We also say that $h$ is balanced it it encodes a balanced sequence $w$.
The converse is also true with high probability
by the Schwartz-Zippel lemma.
\begin{fact}\label{fact:schwartz}
Let $w$ be some unbalanced sequence.
Then $\Pr(\hash(w)=0)\leq \frac{N}{p} \leq \frac{1}{N^c}$.
\end{fact}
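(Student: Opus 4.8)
The plan is to exhibit $\hash(w)$, for a fixed unbalanced sequence $w$, as the evaluation at the random point $\alpha$ of one fixed nonzero polynomial over the field $\mathbb{F}_p$, and then invoke the Schwartz--Zippel lemma.

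Concretely, since $\hash$ is additive over the letters of $w$, with each occurrence of $\ins(a)$ contributing $+\alpha^{a}$ and each occurrence of $\ext(a)$ contributing $-\alpha^{a}$, one has $\hash(w)=P_w(\alpha)\bmod p$, where
$$P_w(X)=\sum_{a} c_a\,X^{a},\qquad c_a=\#\{\,i:w[i]=\ins(a)\,\}-\#\{\,i:w[i]=\ext(a)\,\},$$
is a polynomial with integer coefficients. By definition $w$ is balanced exactly when $c_a=0$ for every value $a$, so since $w$ is unbalanced there is some $a_0$ with $c_{a_0}\neq 0$; moreover $|c_{a_0}|$ is bounded by the length of $w$, hence by $N$, and since $p\geq N^{c+1}>N$ the integer $c_{a_0}$ does not vanish modulo $p$. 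Therefore $P_w$ is a nonzero polynomial in $\mathbb{F}_p[X]$, of degree at most $N$. By the Schwartz--Zippel lemma (which for a univariate polynomial is just the statement that a nonzero polynomial of degree $d$ over a field has at most $d$ roots), a uniformly random $\alpha\in[0,p-1]$ satisfies $P_w(\alpha)\equiv 0\pmod p$ with probability at most $(\deg P_w)/p\le N/p$; and $N/p\le 1/N^{c}$ since $p\geq N^{c+1}$. This gives both inequalities in the statement.

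There is no genuinely hard step; the single point that needs attention is the passage from $\mathbb{Z}[X]$ to $\mathbb{F}_p[X]$, namely that a coefficient $c_{a_0}$ which is nonzero as an integer remains nonzero after reduction modulo $p$. This is exactly what the choice $p\geq N^{c+1}$ buys us, since $N$ is an a priori bound on $|c_a|$ for every $a$ (no value can be inserted, or extracted, more than $N$ times in a sequence of length at most $N$). Everything else --- reading off $P_w$ from the additivity of $\hash$, and the final arithmetic $N/p\le 1/N^{c}$ --- is routine bookkeeping.
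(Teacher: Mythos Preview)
Your approach---write $\hash(w)$ as the evaluation at $\alpha$ of a fixed polynomial $P_w\in\mathbb{F}_p[X]$ and invoke Schwartz--Zippel---is exactly what the paper intends (the Fact is stated without proof, ``by the Schwartz--Zippel lemma''), and your treatment of the one genuinely delicate point, namely that the nonzero integer coefficient $c_{a_0}$ survives reduction modulo $p$ because $|c_{a_0}|\le N<p$, is correct and well put.

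There is, however, one step that does not go through as written: you assert that $P_w$ has degree at most $N$. The exponents in $P_w(X)=\sum_a c_a X^a$ are the \emph{values} $a\in\{0,\dots,U\}$, not stream positions, so the degree of $P_w$ is bounded by $U$, not by $N$; the length $N$ of $w$ bounds only the number of nonzero coefficients $c_a$, which is irrelevant for the root count over a field. Schwartz--Zippel therefore gives $\Pr(P_w(\alpha)=0)\le U/p$ rather than $N/p$, and with the paper's choice $p\ge\max(2U+1,N^{c+1})$ the further inequality $U/p\le 1/N^c$ is not automatic when $U>N$. Note that the paper's stated bound already reads $N/p$, so this slip is arguably inherited from the source; a clean repair is to choose $p$ of order $U\cdot N^{c}$, which still costs only $\Order(\log U+\log N)$ bits per hashcode and leaves the space bound in Theorem~\ref{main2} unchanged.
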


The forward-pass algorithm was introduced in~\cite{cckm10}, but the reverse-pass one is even simpler.
As a warming up, we start by introducing the later algorithm. In order to keep it simple to understand, we do not
optimize it fully. 
Last  define the instruction $\update(h,v)$ that returns $(h+\hash(v)\mod p)$ {\em and} updates $h$ to that value.

\subsection{One-reverse-pass algorithm for $\PQ$}

Our algorithm decomposes the stream $w$ into blocks. We call
a valley an extraction $w[t]=\ext(a)$ with $w[t+1]=\ins(b)$. A new
block starts at every valley. To the $i$-th block we associate a
hashcode $h_i$ and an integer $m_i$. Hashcode $h_i$ encodes all the extractions within the block and the
matching insertions. Integer $m_i$ is the minimum of extractions
in the block. With the values $(m_i)_i$, one can encode insertions
in the correct $h_i$ if $w \in \PQs$. Observe that we use index notations for block
indices and bracket notations for stream positions.

Algorithm \ref{AlgoReverse} uses memory space $\Order(r)$,
where $r$ is the number of valleys in $w$. We could make it run with
memory space $\Order(\sqrt{N\log N})$ by reducing the number of valleys as
in \cite{cckm10}. We do not need to as we use another compression in the
two-pass algorithm.

\begin{lstlisting}[caption={One-reverse-pass algorithm for
      $\PQ$},label=AlgoReverse,captionpos=t,float,abovecaptionskip=-\medskipamount,mathescape]
$m_0 \gets - \infty$; $h_0 \gets 0$; $t \gets N$; $i \gets 0$  // $i$ is called the block index
While $t > 0$
    If $w[t]=\ins(a)$    
       $k \gets \max\{j \leq i : m_j \leq a\}$;(*@\label{Set_k}@*)    //Compute the hashcode index of $a$
       $\update(h_k,w[t])$
    Else $w[t]=\ext(a)$
       If $w[t+1]=\ins(b)$    //This is a valley. We start a new block
          $i \gets i+1$; $m_i \gets a$; $h_i \gets 0$  //Create a new hashcode
       Else $w[t+1]=\ext(b)$
          Check($a \geq b$) (*@\label{Check1}@*)   //Check that extractions are well-ordered
       $\update(h_i,w[t])$  
    $t \gets t-1$
For $j=0$ to $i$:  Check($h_j=0$) (*@\label{Check2}@*)//Check that hashcodes are balanced w.h.p.
Accept   // $w$ succeeded to all checks
\end{lstlisting}

We first state a crucial property of Algorithm~\ref{AlgoReverse}, and
then show that it satisfies Theorem~\ref{theorem_AlgoReverse}, when
there is no duplicate. We remind that we process the stream from right to left.

\begin{lemma}
\label{lemma_matching_pair}
Consider Algorithm~\ref{AlgoReverse} right after processing $\ins(a)$.
Assume that $\ext(a)$ has been already processed.
Let $h_k,h_{k'}$ be the respective hashcodes including $\ext(a),\ins(a)$.
Then $k=k'$ if and only if all $\ext(b)$ occurring between $\ext(a)$ and $\ins(a)$ satisfy $b>a$.
\end{lemma}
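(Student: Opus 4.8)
We process the stream right-to-left. Lemma~\ref{lemma_matching_pair} concerns the moment right after processing some insertion $\ins(a)$, assuming $\ext(a)$ has already been processed (so it appeared to the \emph{right} of $\ins(a)$ in the stream, hence was seen \emph{earlier} in our reverse pass). Let $h_k$ be the hashcode that includes $\ext(a)$ and $h_{k'}$ the hashcode that includes $\ins(a)$. We must show $k=k'$ iff every $\ext(b)$ strictly between $\ins(a)$ and $\ext(a)$ in the stream satisfies $b>a$.

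**Plan of the proof.** The key is to understand how the block index evolves and how line~\ref{Set_k} assigns a hashcode to an insertion. When $\ext(a)$ was processed, it was placed in the then-current block, say block $k$, so $h_k$ includes $\ext(a)$; by the update rule for valleys, the minimum $m_k$ of that block satisfies $m_k\le a$ (indeed $m_k$ is the minimum over all extractions in block $k$, and $\ext(a)$ is one of them). When we later reach $\ins(a)$, line~\ref{Set_k} sets $k' = \max\{j\le i : m_j\le a\}$ where $i$ is the current block index. The plan is:

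\begin{itemize}
\item First observe that block indices only increase as we move left, and the minima $(m_j)_j$ are recorded in order of creation; moreover between processing $\ext(a)$ and $\ins(a)$ we pass over exactly the extractions $\ext(b)$ lying strictly between $\ins(a)$ and $\ext(a)$ in the stream (plus possibly some insertions, which do not create blocks). So the current block index $i$ when we reach $\ins(a)$ equals $k$ plus the number of valleys encountered among those intermediate extractions.
\item Direction ($\Leftarrow$): Suppose every intermediate $\ext(b)$ has $b>a$. Then $a<b$ for each such $b$, so $a$ is strictly below every extraction processed between $\ext(a)$ and $\ins(a)$. I claim no new block created in that stretch has minimum $\le a$: any such block's minimum is a minimum of a set of intermediate extractions, all of which exceed $a$. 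Also block $k$ still has $m_k\le a$. Since $k'=\max\{j\le i: m_j\le a\}$ and all blocks with index $>k$ (among those $\le i$) have minimum $>a$, we get $k'=k$.
\item Direction ($\Rightarrow$): Contrapositive. Suppose some intermediate $\ext(b)$ has $b\le a$ (so $b<a$ since there are no duplicates). Consider such a $\ext(b)$; by the $\PQ$-style well-ordering checked on line~\ref{Check1}, or more simply just by tracking block minima: the block containing this $\ext(b)$ has minimum $\le b<a$, wait --- we need this block to have index $>k$. Here is the crux: after $\ext(a)$ is processed in block $k$, the \emph{next} time a value $\le a$ is extracted, it will either land in block $k$ (minimum stays $\le a$, fine) or in a later block --- but a later block is only created at a valley, and I need to argue that once we are strictly past $\ext(a)$ going leftward and we hit an extraction $\ext(b)$ with $b<a$, the relevant block containing the \emph{closest-to-$\ins(a)$} such extraction has index $>k$ and minimum $\le b$, which would force $k'\ge$ that index $>k$, contradicting $k'=k$. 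So I would take the \emph{last} intermediate extraction $\ext(b)$ with $b<a$ (the one processed just before $\ins(a)$ going leftward, i.e. rightmost in the stream among those with value $<a$ and between $\ins(a)$ and $\ext(a)$), show its block index is $>k$ (it was created after we left block $k$, or is a block we re-entered --- but blocks are never re-entered since the index only increases), and its minimum is $\le b<a$, hence $k'\ge$ (that index) $>k=k$, so $k'\ne k$.
\end{itemize}

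**Main obstacle.** The delicate point is the ($\Rightarrow$) direction: I must rule out the scenario where an intermediate $\ext(b)$ with $b<a$ occurs but nonetheless still lands in block $k$ (so that the block index never advances past $k$ before reaching $\ins(a)$, which would wrongly give $k'=k$). This cannot happen for the \emph{relevant} extraction, but I need to phrase it correctly: I should track, just before reaching $\ins(a)$, whether the current block index $i$ equals $k$ or exceeds it. If $i=k$, then no valley occurred between $\ext(a)$ and $\ins(a)$, meaning all intermediate operations are in block $k$; but then the position right before $\ins(a)$ going leftward is $\ins(a)$ itself preceded (in stream order) by... hmm --- the real content is that $\ins(a)$ sits at the \emph{start} of whatever block it falls in during the forward reading, i.e. at a valley or block boundary, and since $w[t+1]$ after $\ins(a)$ determines things. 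I expect the clean argument is: the intermediate extractions, read left-to-right, are all $>a$ iff the block minimum never drops to $\le a$ strictly after block $k$ was left, iff $\max\{j\le i: m_j\le a\}$ is still exactly $k$. I would make this precise by induction on the reverse-pass steps between processing $\ext(a)$ and $\ins(a)$, maintaining the invariant that $\max\{j\le (\text{current block}): m_j\le a\} = k$ as long as all extractions seen so far in this stretch exceed $a$, and that it jumps above $k$ the moment one doesn't.
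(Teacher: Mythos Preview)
Your forward direction ($\Leftarrow$) is correct and essentially the paper's argument: every block created strictly after block~$k$ during the stretch has all its extractions among the intermediate ones, hence minimum $>a$, while $m_k\le a$; so line~\ref{Set_k} returns $k'=k$.

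Your contrapositive direction ($\Rightarrow$) has a real gap, and it is exactly the obstacle you flag but do not resolve. Your final plan is to maintain the invariant $\max\{j\le i_t:m_j\le a\}=k$ and to argue that this quantity ``jumps above $k$ the moment'' we meet an intermediate $\ext(b)$ with $b<a$. That claim is false as stated. The values $m_j$ are set once, at the valley that opens block~$j$, and are never updated. If the offending $\ext(b)$ is \emph{not} a valley, no new block is created and no $m_j$ changes, so your max stays equal to~$k$. Choosing the ``last'' (or ``first'') such $\ext(b)$ does not help: the same scenario can occur for any of them.

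The missing ingredient is precisely line~\ref{Check1}, which you mention and then set aside in favour of ``simply tracking block minima''. Here is how the paper closes the gap. Take any intermediate $\ext(b)$ with $b<a$ and let $j$ be its block index. If $j=k$, then $\ext(a)$ and $\ext(b)$ lie in the same block; since a block's extractions form a maximal run of consecutive extractions in the stream, every position between them is an extraction, and the chain of checks $w[t]\ge w[t+1]$ at line~\ref{Check1} forces $b\ge a$. This contradicts the lemma's implicit hypothesis that the algorithm has reached the point ``right after processing $\ins(a)$'' without rejecting. Hence $j>k$. Now line~\ref{Check1} (already passed) also gives $m_j\le b<a$, so $j\in\{j'\le i:m_{j'}\le a\}$ and therefore $k'\ge j>k$.

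In short: the block-minima bookkeeping alone cannot separate $k$ from $k'$ in the bad case; you must invoke line~\ref{Check1} (equivalently, the not-yet-rejected hypothesis) to rule out $j=k$. Once you add that one sentence, your argument becomes the paper's.
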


\begin{theorem}
\label{theorem_AlgoReverse}
There is a  $1$-reverse-pass randomized streaming algorithm for
$\PQs(U)$ with memory space $\Order(r(\log N + \log U))$ and one-sided bounded error $N^{-c}$,
for inputs  of length $N$ with $r$ valleys, and any constant $c>0$.
\end{theorem}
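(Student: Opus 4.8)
The plan is to establish \emph{completeness} (a correct stream is always accepted) and \emph{soundness} (an incorrect one is rejected with high probability) separately, and then to read the resource bounds off the pseudocode of Algorithm~\ref{AlgoReverse}; everything is under the standing no‑duplicate assumption, so each value occurring in $w$ is inserted at most once and extracted at most once. Two observations are used throughout. First, the partition of $w$ into blocks, the minima $m_j$, and the block to which each letter is assigned by line~\ref{Set_k} depend on $w$ only, not on the random $\alpha$; only the hashcode \emph{values} do. Second, the algorithm reaches the tests of line~\ref{Check2} only after every comparison of line~\ref{Check1} has succeeded, so ``accept'' is equivalent to ``every ordering check passes and every $h_j\equiv 0\pmod p$''.

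\emph{Completeness.} Let $w\in\PQs(U)$. Consecutive extractions of a priority queue are strictly decreasing, so line~\ref{Check1} never fails. Between $\ins(a)$ and the later $\ext(a)$ the value $a$ stays in the queue, so every intervening extraction removes a value strictly above $a$; by Lemma~\ref{lemma_matching_pair} this means $\ins(a)$ is written into the same block as $\ext(a)$. Hence each block is a disjoint union of matched pairs $\{\ins(a),\ext(a)\}$, its recorded sequence is balanced, its hashcode is the zero polynomial in $\alpha$, and in particular $h_j\equiv 0$; so line~\ref{Check2} always passes and $w$ is accepted with probability $1$.

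\emph{Soundness.} Suppose $w\notin\PQs(U)$; if some ordering check fails the algorithm rejects deterministically, so assume they all pass. I will exhibit a single sequence $s$, determined by $w$ alone, that is unbalanced (some value is inserted and extracted a different number of times) and such that acceptance forces $\hash(s)=0$; by Fact~\ref{fact:schwartz} the acceptance probability is then at most $N/p\le N^{-c}$, and since $s$ is fixed once $w$ is, no union bound over blocks is needed. Three cases. If $w$ itself is unbalanced, take $s=w$, using $\hash(w)=\sum_j h_j$ by linearity. If $w$ is balanced but some $\ext(a)$ precedes its $\ins(a)$, then a valley sits at or after the position of this $\ext(a)$ and strictly before that of $\ins(a)$; tracking the block index, $\ext(a)$ therefore lands in a block of strictly larger index than the one receiving $\ins(a)$ (which by line~\ref{Set_k} never exceeds the current index), so that block lacks $\ins(a)$; take $s$ to be its recorded sequence. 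Otherwise $w\in\Col\setminus\PQs(U)$: choose an extraction $\ext(a)$ that is not the current maximum $c>a$; then $\ins(c)$ occurs before and $\ext(c)$ after this extraction, so $\ext(a)$ lies strictly between $\ins(c)$ and $\ext(c)$ with value below $c$, and Lemma~\ref{lemma_matching_pair} places $\ins(c)$ and $\ext(c)$ in distinct blocks, so the block of $\ext(c)$ lacks $\ins(c)$; take $s$ to be its recorded sequence. In every case $s$ is unbalanced, hence $\hash(s)$ is not the zero polynomial and Fact~\ref{fact:schwartz} applies.

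\emph{Resources and main obstacle.} The algorithm makes a single pass, from $t=N$ down to $1$, and stores $t$, $i$, $\alpha$, and the arrays $(m_j)_j$ and $(h_j)_j$ over the $O(r)$ block indices, each entry using $O(\log U+\log p)=O(\log N+\log U)$ bits, for $O(r(\log N+\log U))$ bits in total; line~\ref{Set_k} can be evaluated in $O(r)$ time, so the per‑letter running time is polynomial in $N$, which is all the statement requires. The step I expect to be most delicate is the last case of soundness: one must verify that passing every ordering test is precisely what makes $m_j$ the true minimum of block $j$ and hence legitimizes the use of Lemma~\ref{lemma_matching_pair}, and then argue that the block carrying $\ext(c)$ can never subsequently receive $\ins(c)$, so that its hashcode is a genuinely nonzero polynomial rather than one that only happens to vanish at the sampled $\alpha$.
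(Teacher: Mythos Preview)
Your proof is correct and follows essentially the same route as the paper's: completeness via Lemma~\ref{lemma_matching_pair}, and soundness by the same three-way case split (unbalanced; balanced with an extraction preceding its insertion; balanced in $\Col$ but violating the max rule), each time exhibiting one unbalanced hashcode and invoking Fact~\ref{fact:schwartz}. The only cosmetic differences are that you phrase the third case via the current maximum $c$ rather than the paper's pair $(a,b)$, and in the unbalanced case you test $\sum_j h_j=\hash(w)$ instead of a single $h_j$; both are equivalent to the paper's argument.
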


\begin{proof}
We show that Algorithm~\ref{AlgoReverse} suits the conditions, assuming there is no duplicate.
Let $w \in \PQs(U)$. Then $w$ always passes the test at line~\ref{Check1}. 
Moreover, by Lemma~\ref{lemma_matching_pair}, each insertion
$\ins(a)$ is necessarily in the same hashcode than its matching extraction $\ext(a)$. 
Therefore, all hashcodes equal~$0$ at line~\ref{Check2} since they are balanced.
In conclusion, the algorithm accepts $w$ with probability $1$.

Assume now that $w \not \in \PQs$. First we show that unbalanced $w$
are rejected with high probability, that is at least $1-N^{-c}$, 
at line~\ref{Check2}, if they are not rejected before. Indeed, since
each $w[t]$ is encoded in some $h_j$, 
at least one $h_j$ must be unbalanced. Then by
Fact~\ref{fact:schwartz}, the algorithm rejects w.h.p.
We end the proof assuming $w$ balanced.
We remind that we process the stream from right to left.
The two remaining possible errors are: 
(1) $\ins(a)$ is processed before $\ext(a)$, for some $a$; 
and (2) $\ext(a),\ext(b),\ins(a)$ are processed in this order with
$b<a$ and possibly intermediate insertions/extractions.
In both cases, we show that some hashcodes are unbalanced at
line~\ref{Check2}, and therefore fail the test w.h.p by
Fact~\ref{fact:schwartz}, except if the algorithm rejects before.

Consider case (1). Since $\ins(a)$ is processed before $\ext(a)$, 
there is at least one valley between $\ins(a)$ and
$\ext(a)$. Therefore $\ins(a)$ and $\ext(a)$ are encoded into two
different hashcodes, that are unbalanced at line~\ref{Check2}.

Consider now case (2). Lemma~\ref{lemma_matching_pair} gives that
$\ext(a)$ and $\ins(a)$ are encoded in two different hashcodes,
that are again unbalanced at line~\ref{Check2}.
\end{proof}

\subsection{Bidirectional two-pass algorithm}
Algorithm~\ref{Algo2Pass} performs one pass in each direction using Algorithm~\ref{1Pass}.
We use the hierarchical data structure of~\cite{mmn10} in order to reduce the number of blocks.
A block of size $2^i$ is of the form $[ (q-1)2^i +1,q2^i]$, for $1\leq q\leq N/2^i$.
Observe that, given two such blocks, either they are disjoint or one is included in the other.
We decompose dynamically the letters of $w$, that have been already processed, 
into nested blocks of $2^i$ letters as follows.
Each new processed letter of $w$ defines a new block. When two blocks have same size, they merge. 
All processed blocks are pushed on a stack. Therefore, only the two topmost blocks of the stack may potentially merge.
Because the size of each block is a power of $2$ and at most two blocks have the same size (before merging),
there are at most $\log N +1$ blocks at any time.

Moreover, since our stream size is a power of $2$, all
blocks eventually appear in the hierarchical decomposition, whether we
read the stream from left to right or from right to left. 
In fact, if two same-sized blocks appear simultaneously in one decomposition before merging, the same is true in the
other decomposition. This point is crucial for our analysis. 

Algorithm~\ref{1Pass} uses the following description of a block $B$:
its hashcode $h_B$ , the minimum $m_B$ of its extractions, and  its
size $\ell_B$. For the analysis, we also note $t_B$ the index
such that $w[t_B]=m_B$. Among those parameters, only $h_B$ can change 
without $B$ being merged with another block. 
On the pass from right to left, all extractions from the block 
and the matching insertions are included in $h_B$. 
On the pass from left to right, insertions are included in the
hashcode of the earliest possible block where they could have been,
and the extractions are included with their matching insertions. The
minimums $(m_B)_B$ are used to decide where to include values (except
extractions on the pass from right to left). Observe that it is
important to check that $h_B=0$ whenever possible and not at the end
of the execution of the algorithm, since only one block is left at the
end.

When there is some ambiguity, we denote by $h_B^\rightarrow$ and $h_B^\leftarrow$ the hashcodes
for the left-to-right and right-to-left passes. Observe that $m_B,t_B,\ell_B$ are identical in both directions.

\begin{lstlisting}[caption={Bidirectional $2$-pass algorithm for $\PQ$},label=Algo2Pass,captionpos=t,float,abovecaptionskip=-\medskipamount,mathescape]
OnePassAlgorithm($w$)   reading stream from left to right 
OnePassAlgorithm($w$)   reading stream from right to left 
Accept   // $w$ succeeded to all checks
\end{lstlisting}

\begin{lstlisting}[caption={OnePassAlgorithm},label=1Pass,captionpos=t,float,abovecaptionskip=-\medskipamount,mathescape]
$S \gets []$; 
If left-to-right-pass Then Push($S$,$(0,-\infty,0)$)  // Initialization of $S$
While stream is not empty
    Read(next letter $v$ on stream)  // See below
    While the $2$ topmost elements of $S$ have same block size $\ell$
       $(h_1,m_1,\ell) \gets$Pop($S$); $(h_2,m_2,\ell) \gets$Pop($S$)
       Push($S$,$(h_1+h_2  \mod p,\min(m_1,m_2),2\ell)$)  // Merge of $2$ blocks
If left-to-right-pass Then Check($S = [(0,-\infty,0),(0,0,N)]$)
Else Check($S = [(0,0,N)]$)}
Return

Function Read(v):
Case $v=\ins(a)$  // When reading an insertion
    Let $(h,m,\ell)$ be the first item of $S$ from top such that $a\geq m$ (*@\label{BlockIns}@*)
    Replace $(h,m,\ell)$ by $(\update(h,v),m,\ell)$
    Push $($S$,(0,+\infty,1)$)
Case $v=\ext(a)$ and left-to-right-pass // When reading an extraction
    For all items $(h,m,\ell)$ on $S$ such that $m>a$: Check($h=0$) (*@\label{CheckLR}@*)
    Let $(h,m,\ell)$ be the first item of $S$ from top such that $a> m$
    Replace $(h,m,\ell)$ by $(\update(h,v),m,\ell)$
    Push($S$,$(0,a,1)$)
Case $v=\ext(a)$ and right-to-left-pass // When reading an extraction
    For all items $(h,m,\ell)$ on $S$ such that $m>a$: Check($h=0$) (*@\label{CheckRL}@*)
    Push($S$,$(\hash(v),a,1)$)
\end{lstlisting}

\begin{proof}[Proof of Theorem~\ref{main2}]
We show that Algorithm~\ref{Algo2Pass} suits the conditions, assuming there is no duplicate.
The space constraints are satisfied because each element of $S$
takes space $\Order(\log N + \log U)$ and $S$ has size at most 
$\log N+1$. The processing time is from inspection.

As with Theorem~\ref{theorem_AlgoReverse}, inputs in $\PQs(U)$
are accepted with probability $1$, and unbalanced inputs
are rejected with high probability (at least $1 - N^{-c})$. Let $w \not \in \PQs$ be
balanced. 
For ease of notations, let $w[-1]=\ins(-\infty)$ and $w[0]=\ext(-\infty)$.
Then, there are $\tau < \rho$ such that $w[\tau]=\ext(b)$, $w[\rho]=\ext(a)$,
$a>b$, and $w[t] \not =\ins(a)$ for all $\tau < t < \rho$. 

Among those pairs $(\tau,\rho)$,  consider the ones with the
smallest $\rho$. From those,  select the one with the smallest
$b$, with $w[\tau]=\ext(b)$.
Let $B$, $C$ be the largest possible disjoint blocks such
that $\tau$ is in $B$ and $\rho$ in $C$. Then $B$ and $C$ have
same size, are contiguous, and appear simultaneously in each direction before they merge.
Let $\rho'$ and $\tau'$ be such that $w[\rho']=\ins(a)$ and
$w[\tau']=\ins(b)$. 
The minimality of $\rho$ and the minimality of $b$
guarantee that $w[t]$ is an insertion for all $\tau < t < \rho$.
Indeed if $w[t]=\ext(c)$ either $b> c$, which contradicts the minimality of $b$, or $c > b$ and
$(\tau,t)$ contradicts the minimality of $\rho$. 
In particular, $t_C
\geq \rho$ and $t_B \leq \tau$. Similarly $\tau < \tau'$,
otherwise $\tau$ would be a better candidate than $\rho$.

We distinguish three cases based on the position $\rho'$ of
$\ins(a)$ (see Figure~\ref{fig:main2}): $\rho' \not \in [t_B,t_C]$, $t_B < \rho' < \tau$,
and $\rho < \rho' < t_C$. These cases determine in which hashcode
$\ins(a)$ is included. We analyze Algorithm~\ref{1Pass} when some letter is processed
before blocks potentially merge.

{\em Case~1}: $\rho' \not \in [t_B,t_C]$. One can prove that $h_B^\rightarrow$ is unbalanced when
$w[t_C]$ is processed and that $h_C^\leftarrow$ is unbalanced when
$w[t_B]$ is processed; therefore Algorithm~\ref{1Pass}  detects w.h.p.
$h_B^\rightarrow \not = 0$ or $h_C^\leftarrow \not = 0$ depending on whether $m_B>m_C$
(see Lemma~\ref{lemme a} in Appendix~\ref{app:algo}).

{\em Case~2}: $t_B < \rho' < \tau$. 
We show that when Algorithm~\ref{1Pass} processes $w[t_B]=\ext(m_B)$,
it checks $h_{D}^\leftarrow=0$ at line~\ref{CheckRL} for some $h_{D}^\leftarrow$ 
including $\ins(a)$ but not $\ext(a)$. Thus it rejects w.h.p.

When $w[\rho']=\ins(a)$ is processed on the right-to-left pass, 
$\tau\in B_1$ with $B_1$ a block in the stack. $\tau \in B$, therefore
$B_1$ intersects $B$. Because $B_1 \not \subseteq B$, we have $B_1 \subseteq B$.
Because $w[\tau]=\ext(b)$, we have $a>b\geq m_{B_1}$, and block~$B_1$
is eligible at line~\ref{BlockIns} of Algorithm~\ref{1Pass},
meaning that $w[\rho']=\ins(a)$ is included in either $h_{B_1}^\leftarrow$  
or a more recent hashcode $h_{B_2}^\leftarrow$.
Since $\rho' \in B$, again $B_2\subseteq B$.
Last, when Algorithm~\ref{1Pass} processes $w[t_B]=\ext(m_B)$, since
we are still within $B$, some hashcode $h_{B_3}$, with $B_3 \subseteq
B$, includes $w[\rho']$. 
Moreover, $h_{B_3}^\leftarrow$ does not include $w[\rho]=\ext(a)$
since $\rho\in C$ and $C$ comes before $B$. Last, $m_{B_3} > m_B$, by
definition of $m_B$.
Hence, Algorithm~\ref{1Pass} checks
$h_{B_3}^\leftarrow=0$ at line~\ref{CheckRL} when processing
$w[t_B]$. $B_3$ satisfies the conditions for $D$ when $w[t_B]$ is
processed, and Algorithm~\ref{1Pass} rejects w.h.p.

{\em Case~3}: $\rho < \rho'< t_C$. The proof is the same as case $2$,
replacing $\tau$, $B$, $B_1$, $B_2$, $B_3$, $h_{B_1}^\leftarrow$,
$h_{B_2}^\leftarrow$, $h_{B_3}^\leftarrow$, $t_B$, $C$ with $\rho$,
$C$, $C_1$, $C_2$, $C_3$, $h_{C_1}^\rightarrow$, $h_{C_2}^\rightarrow$,
$h_{C_3}^\rightarrow$, $t_C$, $B$ and line~\ref{CheckRL} with line~\ref{CheckLR}. Note that we only have $a \geq
m_{C_1}$ this time, so it is important that the inequality at
line~\ref{BlockIns} is large and not strict.
\end{proof}

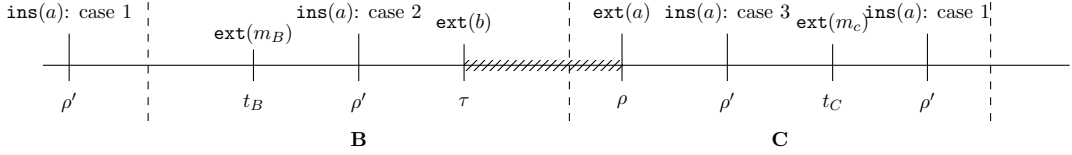
\begin{figure}
\hspace*{1cm}\begin{tikzpicture}[scale=0.7,every node/.style={scale=0.7}]
\draw (0,0) -- (19.5,0);
\draw[dashed] (10,1.2) -- (10,-1.2);
\draw[dashed] (2,1.2) -- (2,-1.2);
\draw[dashed] (18,1.2) -- (18,-1.2);
\draw (8,0.4) -- (8,-0.3);
\draw (11,0.6) -- (11,-0.3);
\draw (4,0.3) -- (4,-0.3);
\draw (15,0.4) -- (15,-0.3);
\draw (0.5,-0.3) -- (0.5,0.6);
\draw (6,-0.3) -- (6,0.6);
\draw (13,-0.3) -- (13,0.6);
\draw (16.8,-0.3) -- (16.8,0.6);

\node at (8,0.8) {$\ext(b)$};
\node at (11,1) {$\ext(a)$};
\node at (4,0.6) {$\ext(m_B)$};
\node at (15,0.8) {$\ext(m_c)$};
\node at (8,-0.7) {$\tau$};
\node at (11,-0.7) {$\rho$};
\node at (4,-0.7) {$t_B$};
\node at (15,-0.7) {$t_C$};
\node at (0.5,1) {$\ins(a)$: case 1};
\node at (6,1) {$\ins(a)$: case 2};
\node at (13,1) {$\ins(a)$: case 3};
\node at (16.8,1) {$\ins(a)$: case 1};
\node at (0.5,-0.7) {$\rho'$};
\node at (6,-0.7) {$\rho'$};
\node at (13,-0.7) {$\rho'$};
\node at (16.8,-0.7) {$\rho'$};
\node at (6,-1.4) {$\mathbf{B}$};
\node at (14,-1.4) {$\mathbf{C}$};

\fill[pattern=north east lines] (8,0.1) -- (11,0.1) -- (11,-0.1) --
(8, -0.1);

\end{tikzpicture}
\caption{\label{fig:main2}Relative positions of insertions and
  extractions used in the proof of Theorem~\ref{main2}}
\end{figure}

\subsection{Generalization when duplicates occur}

We maintain two additional parameters $\delta_B$ and $C_B$ for each block
$B$. The difference between the number of insertions and
extractions included in $h_B$ is stored in $\delta_B$. Whenever $\delta_B=0$, we check
$h_B=0$. The number of unmatched occurrences of $\ins(m_B)$ for the left-to-right pass
(resp. $\ext(m_B)$ for the right-to-left pass) is stored in $C_B$. 
We can then appropriately
determine whether each $\ext(m_B)$ (resp. $\ins(m_B)$) should be included in $h_B$.

The change on the criterion of line~\ref{BlockIns} of
Algorithm~\ref{1Pass} makes the proof of case 3 of the theorem longer
and breaks the symmetry. 

\section*{Acknowledgements}
Authors would like to thank Rahul Jain and Ashwin Nayak for sharing their intuition and possible extensions of~\cite{jn10}.
In particular, N.F. thanks Ashwin Nayak for having hosted him at IQC, University of Waterloo.
They also thanks Christian Konrad, Wei Yu, Qin Zhang for related discussions,
and Andrew McGregor for motivating us to study restricted instances of $\PQ$.

\bibliography{fm12}

\appendix

\section{Missing proofs for the lower bound}\label{app:lb}
We start by proving the lemma relating  the streaming complexity of
deciding if an instance of $\RD{m}{n}$ belongs to $\PQT(3mn)$ to the
communication complexity of $\MHAI{n}{m}$.
\begin{proof}[Proof of Lemma~\ref{stream-comm}]
Assume that there exists a $p$-pass randomized streaming algorithm with memory space $s(m,n)$,
that decides if an instance of $\RD{m}{n}$ belongs or not to $\PQT(3nm)$.
Each instance of $\RD{m}{n}$ can be encoded by an input of $\MHAI{n}{m}$,
where each of the $3m$ players has one part of it.
Then, the rest of the proof consists in showing how the players can use the algorithm in order to
construct a protocol that satisfies the required properties of the lemma.

Each player simulates alternatively the algorithm.
A player performs the simulation until the algorithm reaches the part of the input of the next player.
Then the player sends the current state of the algorithm, so that the next player can continue the simulation. 
Since the algorithm uses at most memory space $s(m,n)$, the current state can be encoded using $s(m,n)$ bits.
Each pass corresponds to one round of communication, implying the result.
\end{proof}

Before giving the next missing proofs of Section~\ref{sec:lb}, we state some useful properties
of entropy and mutual information that we need. See~\cite{jn10} for more information.
\begin{fact}\label{fact:mi}
Let $X, Y, Z, R$ be random variables such $X$ and $Z$ are independent
when conditioning on $R$, namely when conditioning on $R=r$, for each
possible values of $r$.
Then $\mi(X : Y | Z, R) \geq \mi(X : Y | R)$.
\end{fact}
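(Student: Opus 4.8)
The plan is to unfold both conditional mutual informations using the paper's definition $\mi(U:V|W)=\h(U|W)-\h(U|V,W)$ and then compare the two resulting pairs of entropy terms. Concretely, $\mi(X:Y|Z,R)=\h(X|Z,R)-\h(X|Y,Z,R)$ while $\mi(X:Y|R)=\h(X|R)-\h(X|Y,R)$, so the claim follows once we establish the two inequalities $\h(X|Z,R)\geq\h(X|R)$ and $\h(X|Y,Z,R)\leq\h(X|Y,R)$ and subtract them.

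The second inequality is just monotonicity of conditional entropy under additional conditioning: equivalently, it is the statement $\mi(X:Z|Y,R)=\h(X|Y,R)-\h(X|Y,Z,R)\geq 0$, which is the non-negativity of mutual information recalled in the Preliminaries. For the first inequality I would invoke the hypothesis. Since $X$ and $Z$ are independent conditioned on $R=r$ for every value $r$, for each $(z,r)$ in the support we have $\h(X|Z=z,R=r)=\h(X|R=r)$, and averaging (over $Z,R$ on the left, over $R$ on the right) gives $\h(X|Z,R)=\h(X|R)$ — an equality, which is in fact stronger than what we need. Equivalently, $\mi(X:Z|R)=\Exp_{r\gets R}\mi(X:Z|R=r)=0$.

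Combining the two observations yields
\[
\mi(X:Y|Z,R)=\h(X|Z,R)-\h(X|Y,Z,R)=\h(X|R)-\h(X|Y,Z,R)\geq\h(X|R)-\h(X|Y,R)=\mi(X:Y|R),
\]
which is the desired inequality. I do not expect any genuine obstacle here; the only point to watch is that the hypothesis is stated pointwise in $r$, which is exactly what allows us to conclude $\mi(X:Z|R)=0$ rather than merely $\mi(X:Z|R)\geq 0$. An equally short alternative is to expand $\mi(X:Y,Z|R)$ by the chain rule for mutual information in two ways, obtaining $\mi(X:Z|R)+\mi(X:Y|Z,R)=\mi(X:Y|R)+\mi(X:Z|Y,R)$, and then drop the vanishing term $\mi(X:Z|R)$ on the left and the non-negative term $\mi(X:Z|Y,R)$ on the right.
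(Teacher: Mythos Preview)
Your proof is correct and follows essentially the same approach as the paper: expand $\mi(X:Y|Z,R)=\h(X|Z,R)-\h(X|Y,Z,R)$, use the conditional independence hypothesis to replace $\h(X|Z,R)$ by $\h(X|R)$, and then bound $\h(X|Y,Z,R)\leq\h(X|Y,R)$ by monotonicity of conditional entropy. The chain-rule alternative you mention at the end is a minor variant, but the core argument matches the paper's.
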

\begin{proof}
From the definition of mutual information and the independence of $X,Z$ when conditioning on $R$,
we get that
$$\mi( X : Y | Z,R ) = \h(X | Z,R) - \h(X|Y,Z,R) = \h(X |R) - \h(X|Y,Z,R).$$
Using that entropy can only decrease under conditioning, and using
again the definition of mutual information, we conclude by bounding the last term as
$$\h(X | R) - \h(X|Y,Z,R) \geq \h(X | R) - \h(X|Y,R) = \mi( X : Y | R ).$$
\end{proof}

\begin{proposition}[Chain rule]
Let $X,Y,Z,R$ be random variables.
Then $\mi(X,Y:Z | R) = \mi(X:Z|R) + \mi(Y:Z|X,R)$.
\end{proposition}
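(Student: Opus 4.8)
The plan is to derive the chain rule for mutual information directly from its definition in terms of entropy, $\mi(A:B|C) = \h(A|C) - \h(A|B,C)$, after first establishing the analogous chain rule for conditional entropy. So the first step is to prove that $\h(X,Y|R) = \h(X|R) + \h(Y|X,R)$. This follows by unfolding definitions: for each fixed value $r$, one writes $\Pr(X=x,Y=y\mid R=r) = \Pr(X=x\mid R=r)\cdot\Pr(Y=y\mid X=x,R=r)$, takes the logarithm to split the joint probability into a sum, and takes expectations over $(x,y)\gets (X,Y)\mid R=r$; averaging the resulting identity over $r\gets R$ gives the claim. The same identity applied with the pair $(Z,R)$ in place of $R$ yields $\h(X,Y|Z,R) = \h(X|Z,R) + \h(Y|X,Z,R)$.

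The second step is purely algebraic. Starting from $\mi(X,Y:Z|R) = \h(X,Y|R) - \h(X,Y|Z,R)$, substitute the two entropy decompositions just obtained to get
$$\mi(X,Y:Z|R) = \bigl(\h(X|R) - \h(X|Z,R)\bigr) + \bigl(\h(Y|X,R) - \h(Y|X,Z,R)\bigr).$$
By definition the first parenthesis is $\mi(X:Z|R)$, and the second, viewing $X$ as part of the conditioning, is $\mi(Y:Z|X,R)$. This is exactly the stated identity.

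I do not expect any genuine obstacle here: the entire argument is routine bookkeeping with the definitions of entropy and mutual information, and the only point requiring a bit of care is keeping track of which variables sit in the conditioning when invoking the conditional-entropy chain rule (in particular, treating $(Z,R)$ uniformly as ``the conditioning'' in the second application). If one prefers, the conditional-entropy chain rule itself can simply be quoted as standard, in which case the proof collapses to the single line of substitution and regrouping above.
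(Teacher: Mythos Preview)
Your proof is correct; it is the standard derivation of the chain rule for mutual information via the chain rule for conditional entropy. The paper itself does not prove this proposition at all---it is simply stated as a known fact from information theory---so there is nothing to compare against, and your argument supplies strictly more detail than the paper.
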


\begin{proposition}[Data processing inequality]
Let $X,Y,Z,R$ be random variables such that $R$ is independent from $X,Y,Z$.
Then $\mi(X:Y | Z) \geq \mi(f(X,R):Y | Z)$, for every function $f$.
\end{proposition}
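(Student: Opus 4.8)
The plan is to split the claim into two elementary moves and chain them. First I would absorb the independent auxiliary randomness $R$ into the left argument at no cost, showing $\mi(X,R:Y\mid Z)=\mi(X:Y\mid Z)$. Then I would apply the deterministic data-processing inequality to the map $(X,R)\mapsto f(X,R)$, which gives $\mi(X,R:Y\mid Z)\ge \mi(f(X,R):Y\mid Z)$; putting the two together yields the proposition.

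For the first move I would use the chain rule to write
\[
\mi(X,R:Y\mid Z)=\mi(R:Y\mid Z)+\mi(X:Y\mid R,Z).
\]
Because $R$ is independent of the triple $(X,Y,Z)$, it is in particular independent of $(Y,Z)$, hence independent of $Y$ conditioned on $Z$, so $\mi(R:Y\mid Z)=0$. For the second term, independence of $R$ from $(X,Y,Z)$ means that conditioning on $R=r$ leaves the joint law of $(X,Y,Z)$ unchanged, so $\mi(X:Y\mid R=r,Z)=\mi(X:Y\mid Z)$ for every value $r$; averaging over $R$ gives $\mi(X:Y\mid R,Z)=\mi(X:Y\mid Z)$, and the first move follows.

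For the second move I would set $W=f(X,R)$ and expand $\mi\bigl((X,R),W:Y\mid Z\bigr)$ by the chain rule in two ways. Since $W$ is a deterministic function of $(X,R)$ we have $\mi(W:Y\mid X,R,Z)=0$, so one expansion gives $\mi\bigl((X,R),W:Y\mid Z\bigr)=\mi(X,R:Y\mid Z)$; the other gives $\mi\bigl((X,R),W:Y\mid Z\bigr)=\mi(W:Y\mid Z)+\mi(X,R:Y\mid W,Z)\ge\mi(W:Y\mid Z)$ by non-negativity of mutual information. Hence $\mi(X,R:Y\mid Z)\ge\mi(f(X,R):Y\mid Z)$, and combining with the first move finishes the proof. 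I do not expect a real obstacle; the single point that needs care is using the hypothesis at full strength in the first move — merely having $R$ independent of each of $X$, $Y$, $Z$ separately would not keep the conditional law of $(X,Y,Z)$ fixed under conditioning on $R$, whereas independence from the whole triple does. Everything else is just the chain rule and non-negativity of mutual information, both recalled in the preliminaries.
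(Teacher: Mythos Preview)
Your proposal is correct and follows exactly the same approach as the paper's brief justification: first use independence of $R$ from $(X,Y,Z)$ to get $\mi(X:Y\mid Z)=\mi(X,R:Y\mid Z)$, then apply the standard (deterministic) data processing inequality to the map $(X,R)\mapsto f(X,R)$. You simply supply the details the paper leaves implicit.
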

Note that the previous property is usually stated with no variable $T$. Nonetheless, since $T$ is independent
from the other variables, we have $\mi(X:Y | Z)=\mi(X,R:Y | Z)$, and
then we can apply the usual data processing inequality.

We can now prove our three lemmas.

\begin{proof}[End of proof of Lemma~\ref{multi_dir_sum}]
Let $\Pi,\Pi'$ be the respective transcripts of $P,P'$.
For convenience, note $\Pi_{C_{m+1}}=\Pi_{B_m}$, $\Pi_{B_{0}}=\Pi_{C_m}$ and
$\Pi_{C_{m+1}}=\Pi_{A_1}$. 
Remind that the public coins of a protocol are included in its transcript.

First, each player of $P'$ sends $3$ messages by round, and there are $(p+1)$ rounds.
Since each message has size at most $s(m,n)$, we derive that the length of $\Pi'$ is at most
$3(p+1)s(m,n)$.

Then, in order to prove that there is only a small amount of information in the transcripts of Bob and Charlie,
we show a direct sum of some appropriated notion of information cost. 
Consider first the transcript of Player $C_1$. Because of the restriction on the size of his messages, 
we know that $|\Pi_{C_1}|\leq (p+1) s(m,n)$. From this we derive a first inequality
on the amount of information this transcript can carry,
using that the entropy of a variable is at most its bit-size:
$$ \mi(\mathbf{K},\mathbf{D}:\Pi_{C_1}|\mathbf{X}) \leq |\Pi_{C_1}| \leq (p+1) s(m,n).$$
We now use the chain rule in order to get a bound about the information carried by $P'$ on a single instance.
\begin{eqnarray*}
\mi(\mathbf{K},\mathbf{D}:\Pi_{C_1}|\mathbf{X})
& = & \sum_{j=1}^m
\mi((K_i,D_i)_{j\geq i}:\Pi_{C_1}|\mathbf{X},(K_i,D_i)_{i<j})
\quad\text{(by chain rule)}
\\
& \geq &
\sum_{j=1}^m \mi(K_j,D_j:\Pi_{B_{j-1}}|\mathbf{X},(K_i,D_i)_{i<j})
\quad\text{(by data processing inequality)}
\\
& \geq & \sum_{j=1}^m \mi(K_j,D_j:\Pi_{B_{j-1}}|\mathbf{X}) 
\quad\text{(by Fact~\ref{fact:mi})}
\\
& = & m\times\mi(K_J,D_J:\Pi_{B_{J-1}}|\mathbf{X},J) 
\quad\text{(by conditioning on $J$)}
\\
& = & m\times\mi(K_J,D_J:\Pi_{B_{J-1}} ,J,(X_i)_{i\neq J} | X_J) 
\ \text{(independence of $J,(X_i)_{i\neq J}$)}
\\
&=& m\times \mi(K,D:\Pi'_C|X)
\quad\text{(since $J,(X_i)_{i\neq J}$ are public coins of $P'$).}
\end{eqnarray*}

We then do similarly for Player $B_m$ and therefore conclude the proof.
First the size bound on messages of $B_m$ gives 
$\mi(\Pi_{B_m}:\mathbf{D}|\mathbf{X},\mathbf{K})\leq (p+1) s(m,n)$.
Then as before we get:
\begin{eqnarray*}
\mi(\mathbf{D}:\Pi_{B_m}|\mathbf{X},\mathbf{K})
& = &
\sum_{j=1}^m \mi(D_j:\Pi_{B_m}|\mathbf{X},\mathbf{K},(D_i)_{i> j})
\geq
\sum_{j=1}^m \mi(D_j:\Pi_{C_{j+1}}|\mathbf{X},\mathbf{K},(D_i)_{i>j})\\
& \geq &
m\times \mi(D_J:\Pi_{C_{J+1}},J,(X_i)_{i\neq J}|X_J,K_J)
 = m \times \mi( D:\Pi'_B|X,K).
\end{eqnarray*}
\end{proof}

\begin{proof}[Proof of Lemma~\ref{lower_bound_ac}]
From the second hypothesis and the data processing inequality we get that
$\mi(K,D:\Pi_{A,C}|X) \leq \alpha$, which after applying the average encoding leads to
$\Exp_{x,k,d} \hel^2(\Pi_{A,C}(x,k,d),\Pi_{A,C}(x,K,D))\leq \kappa \alpha$.
We now restrict $\mu_0$ by conditioning on $D=1$. Then $(X,K)$ is uniformly distributed.
Moreover, since $D=1$ with probability $3/4$ on $\mu_0$, we get
$\Exp_{x,k } \hel^2(\Pi_{A,C}(x,k,1),\Pi_{A,C}(x,K,1)) \leq \frac{4}{3}\kappa \alpha$.
Let $J,L$ be uniform integer random variables respectively in $[2,\frac{n}{2}]$ and $[\frac{n}{2}+1,n]$.
Then the above implies
$\Exp_{x,j } \hel^2(\Pi_{A,C}(x,j,1),\Pi_{A,C}(x,K,1)) \leq \frac{8}{3}\kappa \alpha$
and
$\Exp_{x,l } \hel^2(\Pi_{A,C}(x,l,1),\Pi_{A,C}(x,K,1)) \leq \frac{8}{3}\kappa \alpha$.
Applying the triangle inequality and that $(u+v)^2\leq 2(u^2+v^2)$, we get
$$\Exp_{x,j,l} \hel^2(\Pi_{A,C}(x,j,1),\Pi_{A,C}(x,l,1)) \leq \tfrac{32}{3}\kappa \alpha.$$
Using the convexity of $\hel^2$, we finally obtain for $b=0,1$:
$$\Exp_{x[1,l-1],j,l}
\hel^2(\Pi_{A,C}(x[1,l-1]bX[l+1,n],j,1),\Pi_{A,C}(x[1,l-1]bX[l+1,n],l,1))
\leq \tfrac{64}{3}\kappa \alpha.$$

Now the chain rule allow us to measure the information about a single bit in $\Pi_{A,C}$ as
\begin{eqnarray*}
\mi(X[L]:\Pi_{A,C}(X,J,1)|X[1,L-1])&=&\Exp_{l\gets L}\mi(X[l]:\Pi_{A,C}(X,J,1)|X[1,l-1])\\
&=&\frac{2}{n}\times \mi(X[\tfrac{n}{2}+1,n]:\Pi_{A,C}(X,J,1) | X[1,\tfrac{n}{2}]).
\end{eqnarray*}
Since the entropy of a variable is at most its bit-size, we get that the last term is upper bounded by
$|\Pi_{A,C}|$, which is at most $\alpha n$ by the first hypothesis.
Then as before, the average encoding and the triangle inequality lead to
$$\Exp_{x[1,l-1],j,l} \hel^2(\Pi_{A,C}(x[1,l-1]0X[l+1,n],j,1),\Pi_{A,C}(x[1,l-1]1X[l+1,n],j,1))
\leq 16\kappa \alpha.$$

Combining gives
$$\Exp_{x[1,l-1],l} \hel^2(\Pi_{A,C}(x[1,l-1]0X[l+1,n],l,1),  \Pi_{A,C}(x[1,l-1]1X[l+1,n],l,1))\leq 28\alpha.$$
Let $R_B$ be the random coins of $B$. Since they are independent from all variables, including the messages, 
the previous inequality is still true when we concatenate $R_B$ to $\Pi_{A,C}$. Then $\Pi_B$ is uniquely
determined from $R_B$ once $K,D,X[1,K-1]$ are fixed, which is the case in that inequality.
Therefore replacing $R_B$ by $\Pi_B$ can only decrease the distance, concluding the proof.
\end{proof}

\begin{proof}[Proof of Lemma~\ref{lower_bound_b}]
Using the data processing inequality and the hypothesis we get that
$\mi(D:\Pi|X,K))\leq\alpha$. Therefore by average encoding,
$\Exp_{x,k,d} \hel^2(\Pi(x,k,d),\Pi(x,k,D))\leq \kappa \alpha$.

Let $L$ be a uniform integer random variable in $[\frac{n}{2}+1,n]$.
Then $\Exp_{x,l,d} \hel^2(\Pi(x,l,d),\Pi(x,l,D))\leq 2\kappa \alpha$.
Using the convexity of $\hel^2$ and the fact that $X[l]$ is a uniform random bit, we derive
$$\Exp_{x[1,l-1],l,d} \hel^2(\Pi(x[1,l-1]0X[l+1,n],l,d),\Pi(x[1,l-1]0X[l+1,n],l,D))\leq 4\kappa \alpha.$$

Since $D=0$ with probability $1/2$ when $X[l]=0$ and $K=l$, we finally get the two inequalities
$$\Exp_{x[1,l-1],l} \hel^2(\Pi(x[1,l-1]0X[l+1,n],l,0),\Pi(x[1,l-1]0X[l+1,n],l,D))\leq 8\kappa \alpha,$$
$$\Exp_{x[1,l-1],l} \hel^2(\Pi(x[1,l-1]0X[l+1,n],l,1),\Pi(x[1,l-1]0X[l+1,n],l,D))\leq 8\kappa \alpha,$$
leading to the conclusion using the triangle inequality and that $(u+v)^2\leq 2(u^2+v^2)$.
\end{proof}

\section{Missing proofs for the algorithm}\label{app:algo}
We start by proving the property of Algorithm~\ref{AlgoReverse} we
use in the proof of Theorem~\ref{theorem_AlgoReverse}.
\begin{proof}[Proof of Lemma~\ref{lemma_matching_pair}]
Remind again, that we process the stream from right to left in this proof, and
that $h_k,h_{k'}$ are the respective hashcodes including $\ext(a),\ins(a)$.
First assume that all $\ext(b)$ between $\ext(a)$ and $\ins(a)$ satisfy $b>a$.
Let $i$ be the current block index while processing $\ins(a)$. 
Observe that $k$ is the current block index right after processing $\ext(a)$. 
Since $\ext(a)$ is processed before $\ins(a)$ and since there is a
valley between $\ext(a)$ and $\ins(a)$, we have $k< i$.

We prove that $k' = \max \{j \leq i | m_j \leq a\}=k$. The first
equality is from line \ref{Set_k} of Algorithm~\ref{AlgoReverse}. We now prove the 
second equality. 
For each $j \in \{k+1,\dots,i\}$, value $m_j$ is extracted between $\ext(a)$ and $\ins(a)$.
Then, our assumption leads to $m_j > a$. 
Moreover, because the algorithm checks at line~\ref{Check1} that
extraction sequences included in the same hashcode are decreasing,
we have $m_k \leq a$, leading to the second equality.

We now prove the converse by contrapositive. 
Assume that some $\ext(b)$ between $\ext(a)$ and $\ins(a)$ satisfies
$b\leq a$. Since we forbid duplicates, in fact $b<a$.
Let $j$ be the current block index right after processing
$\ext(b)$. Then line~\ref{Check1} ensures that $m_j \leq b$. 
Again, $k$ is the current block index right after processing
$\ext(a)$, and therefore $k\leq j$.
If $k=j$, then the extraction sequence is not decreasing and
line~\ref{Check1} rejects, contradicting the hypotheses that the
algorithm has not rejected yet after processing $\ins(a)$. Therefore
$k<j$. But, line~\ref{Set_k} and the fact that $m_j \leq b$ imply that
$k'\geq j$, and therefore $k<k'$.
\end{proof}

We now give the missing part of the proof of Theorem~\ref{main2}.

\begin{lemma}
\label{lemme a}
If $\rho' \not \in [t_B,t_C]$, then
Algorithm~\ref{Algo2Pass} rejects $w$ with probability at 
least $1-N^{-c}$. 
\end{lemma}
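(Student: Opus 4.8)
The plan is to analyze the two passes separately, showing that in Case~1 the relevant hashcodes become unbalanced at the moment one of the two "boundary" extractions $w[t_B]$ or $w[t_C]$ is processed, so that the corresponding $\texttt{Check}(h=0)$ at line~\ref{CheckLR} or line~\ref{CheckRL} fails with probability at least $1-N^{-c}$ by Fact~\ref{fact:schwartz}. Recall the setup from the proof of Theorem~\ref{main2}: $w[\tau]=\ext(b)$, $w[\rho]=\ext(a)$ with $a>b$, $\ins(a)$ never occurs strictly between $\tau$ and $\rho$, everything strictly between $\tau$ and $\rho$ is an insertion, $\tau\in B$, $\rho\in C$, $B$ and $C$ are contiguous equal-sized blocks appearing simultaneously in both directions, $t_B\le\tau<\tau'$, $t_C\ge\rho$, and $w[\rho']=\ins(a)$ with $\rho'\notin[t_B,t_C]$. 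Since $B$ precedes $C$ in the stream, on the left-to-right pass $B$ is built and finalized before $C$, and on the right-to-left pass $C$ is built and finalized before $B$.

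First I would split on whether $\rho'<t_B$ or $\rho'>t_C$, and, crucially, on the comparison between $m_B$ and $m_C$, exactly as indicated in the statement of the theorem's proof. Consider the right-to-left pass and suppose $m_B>m_C$ (the subcase pointed to there for $h_C^\leftarrow$): on this pass $\ext(a)$ with $a>b\ge m_B>m_C\ge m_C$ is encountered while $C$ is still the active (or a sub-)block, hence $\ext(a)=w[\rho]$ is included in $h_C^\leftarrow$ (or in a descendant hashcode that later merges into $h_C^\leftarrow$). Now $\ins(a)=w[\rho']$ lies outside $[t_B,t_C]$, hence outside $C$ entirely; on the right-to-left pass, insertions are routed by the minima $(m_\cdot)$ into some block, and because $\rho'$ is not a position inside $C$, $\ins(a)$ is never included in $h_C^\leftarrow$. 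Therefore at the instant $w[t_C]=\ext(m_C)$ is processed — the last position of $C$ to be read on this pass before $C$ gets merged upward — the accumulated hashcode $h_C^\leftarrow$ contains $\ext(a)$ but not the matching $\ins(a)$, and it contains no other value that could rebalance it (every value included in $h_C^\leftarrow$ has its partner inside $C$ except this one), so $h_C^\leftarrow$ encodes an unbalanced sequence. Since $m_{C}>\dots$ — more to the point, $t_C$ is exactly the position realizing the minimum $m_C$ of extractions in $C$, and at line~\ref{CheckRL} the algorithm checks $h=0$ for every block on the stack whose minimum exceeds $a'=m_C$; one verifies the block carrying $h_C^\leftarrow$ qualifies — the check $h_C^\leftarrow=0$ is performed, fails on the formal polynomial, hence fails numerically with probability $\ge 1-N/p\ge 1-N^{-c}$ by Fact~\ref{fact:schwartz}.

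Symmetrically, for $m_B\le m_C$ I would run the same argument on the left-to-right pass with $B$ in the role of $C$: $\ext(b)=w[\tau]$ with $m_B\le b$ gets included in $h_B^\rightarrow$ while $B$ is active, its matching $\ins(b)=w[\tau']$ also lies in $B$ (since $\tau<\tau'$ and $B$ is a block), so that pair does not unbalance $h_B^\rightarrow$; instead I use that $\ext(a)=w[\rho]$ with $\rho\in C$, $C$ after $B$, is never included in $h_B^\rightarrow$ on the forward pass, while $\ins(a)=w[\rho']\notin[t_B,t_C]$ — hence $\rho'$ is either before $t_B$, placing $\ins(a)$ into $B$ or an earlier block, or after $t_C$, placing it after $B$. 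The delicate point is that $\ins(a)$ must actually land in $h_B^\rightarrow$ when $\rho'<t_B$: here I use that on the forward pass insertions are routed to the earliest eligible block with $m\le a$, that $m_B\le m_C\le a$, that $B$ is on the stack when $w[\rho']$ is read (as $\rho'$ is a position of $B$), and that any earlier block has already been finalized and checked — so $B$ is in fact the earliest eligible block still mergeable, forcing $\ins(a)$ into $h_B^\rightarrow$ (or a descendant merging into it) without its partner $\ext(a)$; then at $w[t_B]=\ext(m_B)$, line~\ref{CheckLR} checks $h_B^\rightarrow=0$ and it fails w.h.p. In every branch exactly one of the two checks is forced, giving total error at most $N^{-c}$.

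The main obstacle, and the place I would spend the most care, is the routing/eligibility bookkeeping: I must be certain that when the unbalanced letter is written (the lone $\ins(a)$ or $\ext(a)$), it is placed in a hashcode that is a descendant of $h_B^\rightarrow$ (resp. $h_C^\leftarrow$) — not in some sibling that merges elsewhere — and that when $w[t_B]$ (resp. $w[t_C]$) is read, that descendant hashcode has already coalesced into the block whose minimum is $m_B$ (resp. $m_C$) so that it is the very $h$ being tested at line~\ref{CheckLR} (resp. line~\ref{CheckRL}). This is where the "$B$ and $C$ appear simultaneously in both directions before merging" structural fact and the minimality of $(\tau,\rho)$ are used, exactly as in the harder Cases~2 and~3 of Theorem~\ref{main2}; the difference is that here, because $\rho'$ is outside $[t_B,t_C]$, the argument is cleaner — the offending pair is split across the $B$/$C$ boundary rather than inside a single block — so once the eligibility of the receiving block is pinned down, the imbalance and the appeal to Fact~\ref{fact:schwartz} are routine.
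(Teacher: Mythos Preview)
Your overall plan --- show that the relevant block hashcode is unbalanced at the moment the other block's minimum is read, then apply Fact~\ref{fact:schwartz} --- is the same as the paper's. However, the execution contains two substantive errors that make the argument fail as written.

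\medskip
\textbf{The case split is assigned to the wrong pass.} You treat the subcase $m_B>m_C$ on the right-to-left pass and try to get $h_C^\leftarrow$ checked at $w[t_C]$. This cannot work: at line~\ref{CheckRL} the algorithm tests only blocks whose minimum strictly exceeds the current extracted value; when $w[t_C]=\ext(m_C)$ is read, block $C$ (even if it existed as a single block, which it does not --- you are still inside $C$ on that pass, since $t_C$ is merely the position of the minimum extraction in $C$, not an endpoint) has minimum exactly $m_C$, so it is never tested there. The correct pairing is the opposite one: if $m_C<m_B$, then on the \emph{forward} pass, when $w[t_C]$ is read, block $B$ is already complete on the stack with $m_B>m_C$ and so $h_B^\rightarrow$ is tested at line~\ref{CheckLR}; symmetrically, if $m_B<m_C$, then on the reverse pass $h_C^\leftarrow$ is tested at $w[t_B]$. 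You have $(B,t_B)$ and $(C,t_C)$ paired together, whereas the checks that actually fire pair $(B,t_C)$ on the forward pass and $(C,t_B)$ on the reverse pass.

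\medskip
\textbf{The routing of $\ins(a)$ is not correctly pinned down.} The implication ``$\rho'\notin[t_B,t_C]$, hence outside $C$ entirely'' is false: $t_C$ is not an endpoint of $C$, so $\rho'>t_C$ may still lie in $C$. More importantly, even when $\rho'$ lies in the appropriate block, your claim that $\ins(a)$ is forced into $h_B^\rightarrow$ because ``$B$ is on the stack when $w[\rho']$ is read'' is wrong --- if $\rho'\in B$ then $B$ is still being built and only proper sub-blocks of $B$ are on the stack. The paper closes this gap with an extra sub-argument you are missing: if $\ins(a)$ sits in some proper sub-block $B_1\subsetneq B$ at the moment $w[t_B]$ is read, then $m_{B_1}>m_B$, $\ext(a)$ has not yet been processed, and line~\ref{CheckLR} tests $h_{B_1}^\rightarrow$, which is unbalanced; only after disposing of this sub-case can one conclude that $h_B^\rightarrow$ itself omits $\ins(a)$ and is therefore unbalanced when $w[t_C]$ is read (and symmetrically for $C$ on the reverse pass).
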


\begin{proof} 
We prove that $h_B^\rightarrow$ is unbalanced when
$w[t_C]$ is processed and that $h_C^\leftarrow$ is unbalanced when
$w[t_B]$ is processed. From that, we deduce that the algorithm rejects
with high probability unless $m_B \leq m_C$ and $m_C \leq m_B$,
i.e. $m_B=m_C$, which is impossible because $w$ has no duplicates and
$B$ and $C$ are disjoint.

Indeed if $m_C < m_B$ then Algorithm~\ref{1Pass} checks that 
$h_B^\rightarrow=0$ at line~\ref{CheckLR} when processing $w[t_C]$,
and rejects with high probability because $h_B^\rightarrow$ is
unbalanced. Similarly, if $m_C < m_B$, it rejects with high
probability at line~\ref{CheckRL} when processing $w[t_B]$ on the
right-to-left pass.

Now we only have to prove that $h_B^\rightarrow$ (resp. $h_C^\leftarrow$) is unbalanced when
$w[t_C]$ (resp. $w[t_B]$) is processed.
Let us assume there exists $B_1 \subsetneq B$ such that $\ins(a)$ is
included in $h_{B_1}^\rightarrow$ when $w[t_B]$ is processed. Then, by
definition of $m_B$, $m_{B_1} > m_B$. Moreover, $\rho \in C$, so $w[\rho]=\ext(a)$ is
not processed yet and not included in $B_1$. Therefore,
Algorithm~\ref{1Pass} checks $h_{B_1}^\rightarrow=0$ at
line~\ref{CheckLR}, and
rejects w.h.p. We can now
assume that there is no such 
$B_1 \subsetneq B$, and therefore that
$h_B$, does not include $\ins(a)$ when $w[t_C]$ is
processed. Since $h_{B}^\rightarrow$ includes $\ext(a)$,
$h_{B}^\rightarrow$ is unbalanced when $t_C$ is processed. 

The proof
for $h_C^\leftarrow$ is the same as above, replacing
$h_B^\rightarrow$, $h_{B_1}^\rightarrow$, $B$, $B_1$, $t_B$ and $t_C$
with $h_C^\leftarrow$, $h_{C_1}^\leftarrow$, $C$, $C_1$, $t_C$ and
$t_B$, and line~\ref{CheckLR} with line~\ref{CheckRL}.
\end{proof}

\end{document}